\PassOptionsToPackage{prologue,usenames,dvipsnames,table}{xcolor}
\PassOptionsToPackage{bookmarks,unicode,colorlinks=true}{hyperref}

\newif\ifanonymous

\anonymousfalse

\ifanonymous
\documentclass[acmsmall,screen,review,anonymous]{acmart}
\else
\documentclass[acmsmall,screen]{acmart}
\fi

\newif\ifarXiv
\arXivtrue

\ifarXiv\else
\paperwidth=\dimexpr \paperwidth + 6cm\relax
\oddsidemargin=\dimexpr \oddsidemargin + 2.9cm\relax
\evensidemargin=\dimexpr \evensidemargin + 2.9cm\relax
\marginparwidth=\dimexpr \marginparwidth + 3cm\relax
\fi

\ifarXiv\PassOptionsToPackage{disable}{todonotes}\fi

\usepackage[ruled,vlined,linesnumbered]{algorithm2e}
\SetArgSty{textrm}
\SetKwComment{tcp}{$\rhd\;$}{}%
\makeatletter
\patchcmd{\@algocf@start}
{-1.5em}
{0pt}
{}{}
\makeatother

\makeatother
\usepackage{threeparttable}
\usepackage{todonotes,xspace}
\usepackage{xcolor}
\definecolor{shadecolor}{rgb}{0.90, 0.90, 0.90}

\newcommand{\blue}[1]{\textcolor{blue}{#1}}

\usepackage[many]{tcolorbox}

\definecolor{promptbg}{RGB}{245,245,245}    
\definecolor{titlebg}{RGB}{235,235,235}      

\newtcolorbox{promptbox}[1]{
    enhanced,
    breakable,
    colback=promptbg,        
    colbacktitle=titlebg,    
    colframe=titlebg,
    coltitle=black,          
    fonttitle=\bfseries\text\small,
    fontupper=\text\small,
    title=#1,               
    boxrule=1.5pt,          
    arc=1pt,                
    left=10pt,              
    right=10pt,             
    top=4pt,                
    bottom=2pt,             
    toptitle=1pt,           
    bottomtitle=1pt,        
    titlerule=0.5pt,        
    titlerule style={titlebg} 
}

\usepackage{graphicx}
\usepackage{caption}
\usepackage{subcaption}
\usepackage{amsmath}
\usepackage{thm-restate}

\usepackage{circledsteps}

\renewcommand{\Circled}[1]{\textcircled{\raisebox{-.87pt}{\texttt{#1}}}}

\usepackage{mathtools}
\usepackage{csquotes}
\usepackage{wrapfig}

\usepackage{fancyvrb}
\usepackage[cachedir=minted-cache]{minted} 

\setminted[c]{
    frame=lines,
    framesep=2mm,
    baselinestretch=1.0,
    fontsize=\footnotesize, 
    breaklines=true,
    xleftmargin=0em,
    xrightmargin=0em,
    autogobble,
    escapeinside=||, 
    showspaces=false,
    showtabs=false
}

\usepackage{framed}

\usepackage{adjustbox}
\usepackage{makecell}
\usepackage{tikz}
\usetikzlibrary{patterns,calc,arrows,shapes,snakes,automata,backgrounds,petri,positioning,decorations.pathmorphing,intersections}
\usepackage{tkz-euclide}
\tikzset{mynode/.style={draw,solid,circle,inner sep=1pt}}
\usepackage[customcolors]{hf-tikz}
\hfsetfillcolor{gray!10}
\hfsetbordercolor{none}
\usepackage{pgfplots}
\usepgfplotslibrary{units}  
\pgfplotsset{compat=1.17}
\usepackage{multirow}
\usepackage{pifont} 

\usepackage{hyperref}
\usepackage[
    type={CC},
    modifier={by-nc-sa},
    version={3.0},
]{doclicense}

\usepackage[capitalise]{cleveref}
\crefname{equation}{}{}
\crefname{figure}{Fig.}{Figs.}

\crefdefaultlabelformat{#2\textup{#1}#3}

\newcommand{\wzycomment}[1]{\todo[color=blue!25,size=\small,fancyline,author=WZY]{#1}\xspace}
\newcommand{\wzycommentinline}[1]{\todo[inline,color=blue!25,author=WZY]{#1}\xspace}

\newcommand{\ours}{\textup{\textsc{CCV}}\xspace}
\newcommand{\vst}{VST/Rocq\xspace}

\newcommand{\defeq}{{}\triangleq{}}

\newcommand{\hoare}[1]{\{\texttt{PRE}_{#1}\}\;#1\;\{\texttt{POST}_{#1}\}}

\def\triangleforqed{\hbox{$\lhd$}}
\makeatletter
\DeclareRobustCommand{\qedT}{%
	\ifmmode
	\eqno \def\@badmath{$$}
	\let\eqno\relax \let\leqno\relax \let\veqno\relax
	\hbox{\triangleforqed}%
	\else
	\leavevmode\unskip\penalty9999 \hbox{}\nobreak\hfill
	\quad\hbox{\triangleforqed}%
	\fi
}
\makeatother

\theoremstyle{plain}
\newtheorem*{thm*}{Theorem}
\newtheorem*{exmp*}{Example}
\newtheorem*{counterexmp*}{Counterexample}

\newtheoremstyle{ourstyle}{}{}{\itshape}{}{\bfseries}{.}{ }{\thmname{#1}\thmnumber{ #2}\thmnote{ (#3)}}
\theoremstyle{ourstyle}

\newtheorem{theorem}{Theorem}
\newtheorem{definition}[theorem]{Definition}

\newtheoremstyle{exmpstyle}{}{}{}{}{\bfseries}{.}{ }{\thmname{#1}\thmnumber{ #2}\thmnote{ (#3)}}
\theoremstyle{exmpstyle}

\newtheorem{example}[theorem]{Example}

\newtheoremstyle{rmkstyle}{}{}{}{}{\bfseries}{.}{ }{\thmname{#1}\thmnote{ (#3)}}
\theoremstyle{rmkstyle}

\newcounter{challenge}
\renewcommand{\thechallenge}{\arabic{challenge}} 

\crefname{challenge}{Challenge}{Challenges}   
\Crefname{challenge}{Challenge}{Challenges}   

\newenvironment{challenge}[1][]
{%
  \refstepcounter{challenge}
  \begin{tcolorbox}[boxrule=1pt,colback=white,colframe=black!75,boxsep=0mm,#1]%
  \textbf{Challenge \thechallenge.}\ %
}
{%
  \end{tcolorbox}%
}

\newcounter{insight}
\renewcommand{\theinsight}{\arabic{insight}} 

\crefname{insight}{Insight}{Insights}   
\Crefname{insight}{Insight}{Insights}   

{%
  \refstepcounter{insight}
  \begin{tcolorbox}[boxrule=1pt,colback=white,colframe=black!75,boxsep=0mm,#1]%
  \textbf{Insight \theinsight.}\ %
}
{%
  \end{tcolorbox}%
}

\providecommand*\phantomsection{}

\AtBeginDocument{%
  }

\makeatletter
\renewcommand\paragraph{\@startsection{paragraph}{4}{\z@}%
  {-.5\baselineskip \@plus -2\p@ \@minus -.2\p@}%
  {-3.5\p@}%
  {\ACM@NRadjust{\@parfont\@adddotafter}}}
\makeatother

\makeatletter
\ifx\correspondingauthor\undefined
\newif\if@ACM@corresponding@present
\@ACM@corresponding@presentfalse
\newcommand\correspondingauthor{%
  \if@ACM@anonymous\else
    \if@ACM@corresponding@present
      \ClassError{\@classname}{You can have no more than one corresponding author}%
    \fi
    \@ACM@corresponding@presenttrue
    \g@addto@macro\addresses{\@correspondingauthormark}%
  \fi}
\def\@correspondingauthormark{\g@addto@macro\@currentauthors{%
    \advance\hfuzz by 5pt\relax\textsuperscript{*}\relax}}
\def\@titlenotes{%
  \if@ACM@corresponding@present
    \footnotetext[1]{Corresponding author}%
  \fi}
\fi
\g@addto@macro\@titlenotes{%
  \if@ACM@anonymous\else
    \footnotetext[2]{Both authors contributed equally to this research.}%
  \fi}
\makeatother

\setcopyright{none}
\acmYear{2026}
\acmJournal{PACMPL}

\begin{document}

\title{Automatically Building Machine-Checked Assurance Cases from C Codebases to Requirements}

\author{Haokun Li}
\authornotemark[2] 
\orcid{0000-0001-6411-9324}
\affiliation{%
  \institution{Zhejiang University}
  \city{Hangzhou}
  \country{China}
  \department{College of Computer Science and Technology}}
\email{ker@pm.me}

\author{Zhongyi Wang}
\orcid{0009-0008-1986-6070}
\authornotemark[2]
\affiliation{%
  \institution{Zhejiang University}
  \city{Hangzhou}
  \country{China}
  \department{College of Computer Science and Technology}}
\email{zhongyi.wang@zju.edu.cn}

\author{Guanyan Li}
\orcid{0000-0002-4163-1840}
\affiliation{%
  \institution{University of Oxford}
  \city{Oxford}
  \country{UK}}
\email{guanyan.li.cs@gmail.com}

\author{Xiao Yi}
\orcid{0000-0002-4792-4433}
\affiliation{%
  \institution{The Chinese University of Hong Kong}
  \city{Hong Kong}
  \country{China}}
\email{yixiao5428@link.cuhk.edu.hk}

\author{Shengchao Qin}
\orcid{0000-0003-3028-8191}
\affiliation{%
  \institution{Xidian University}
  \city{Xi'an}
  \country{China}}
\email{shengchao.qin@gmail.com}

\author{Jianwei Yin}
\orcid{0000-0003-4703-7348}
\affiliation{%
  \institution{Zhejiang University}
  \city{Hangzhou}
  \country{China}
  \department{College of Computer Science and Technology}}
\email{zjuyjw@zju.edu.cn}

\author{Mingshuai Chen}
\orcid{0000-0001-9663-7441}
\correspondingauthor
\affiliation{%
  \institution{Zhejiang University}
  \city{Hangzhou}
  \country{China}
  \department{College of Computer Science and Technology}}
\email{m.chen@zju.edu.cn}

\renewcommand{\shortauthors}{Mingshuai Chen, Zhongyi Wang and Haokun Li}

\begin{abstract}
Large language models (LLMs) have shown promise in automating interactive theorem proving, yet verification of real-world C codebases requires more than discharging individual proof goals.
The task involves jointly constructing expressive function specifications and their proofs, and ensuring that library interfaces compose along intended call sequences even without a designated client.
This paper presents \ours, an LLM-assisted framework for building machine-checked assurance cases: structured, auditable artifacts supporting the claim that a C codebase meets its intended requirements.
To model intended cross-interface use in open libraries, \ours constructs an \emph{interface protocol} that exposes permitted call sequences and resource assumptions for review, with a conditional safety guarantee under verified contracts and caller obligations.
\ours coordinates two complementary phases: (i) requirement-guided analysis and bottom-up construction of candidate specifications and protocols; and (ii) modular proof construction with feedback that revises the specifications and proofs.
Implemented using VST in Rocq, \ours verifies memory safety and leak freedom for all 299 function definitions across six C benchmarks, including industrial cryptographic components, with less than one person-day of reported human effort per benchmark.
The guarantees depend on disclosed contracts and assumptions; human review supplies the conformance judgments connecting the formal artifacts to the intended requirements.
\end{abstract}


\begin{CCSXML}
<ccs2012>
   <concept>
       <concept_id>10003752.10010124.10010138.10010142</concept_id>
       <concept_desc>Theory of computation~Program verification</concept_desc>
       <concept_significance>500</concept_significance>
       </concept>
   <concept>
       <concept_id>10003752.10010124.10010138.10010140</concept_id>
       <concept_desc>Theory of computation~Program specifications</concept_desc>
       <concept_significance>500</concept_significance>
       </concept>
   <concept>
       <concept_id>10011007.10011074.10011099.10011692</concept_id>
       <concept_desc>Software and its engineering~Formal software verification</concept_desc>
       <concept_significance>500</concept_significance>
       </concept>
 </ccs2012>
\end{CCSXML}

\ccsdesc[500]{Theory of computation~Program verification}
\ccsdesc[500]{Theory of computation~Program specifications}
\ccsdesc[500]{Software and its engineering~Formal software verification}


\keywords{Deductive verification, Separation logic, Large language models, VST/Rocq}

\maketitle

\setlength{\floatsep}{.8\baselineskip}
\setlength{\textfloatsep}{.8\baselineskip}
\setlength{\intextsep}{.8\baselineskip}
\captionsetup{aboveskip=5pt, belowskip=0pt}


%

\section{Introduction}\label{sec:introduction}

C codebases continue to carry the weight of safety-critical systems, and their formal verification rests on a machine-checked foundation: \emph{Hoare-style deductive verification} represents intended behavior by function \emph{contracts} (a.k.a., \emph{specifications}) and proves that implementations satisfy them~\cite{hoare-logic}.
Interactive theorem provers (ITPs) support a proof-producing style of verification: an ITP justifies a formal statement that the code meets its stated contracts, yielding machine-checked evidence for that statement~\cite{compcert-cacm09,sel4-sosp09}.
Such verification has provided assurance of critical systems, but developing the necessary contracts and proofs remains costly human work~\cite{sel4-tocs14,compcert-cacm09}.

Recent work has made substantial progress in using large language models (LLMs) to automate proof construction (and repair) in ITPs~\cite{baldur-fse23,proverbot9001,coqpilot-ase24,palm-ase24,cobblestone-icse26,rango-icse25,synver-ase25,slvc-icml26}.
Yet proof-construction automation alone does not meet the verification needs of a large real-world C codebase.
Typically it necessitates establishing an \emph{end-to-end assurance case}~\cite{iso-15026,assurance-case}: not only formal evidence that the implementation satisfies its specifications, but also an auditable argument that those specifications, taken together, conform to the intended requirements.
This raises two challenges at different levels.

First, \emph{how to construct function-level verification---the specification of a function and a machine-checked proof that its body satisfies the specification?}
The two obligations are \emph{coupled}: the concrete form of a specification--what it promises and how it is phrased--decisively shapes whether a body proof can go through, and proof attempts in turn expose where the specification is too strong, too weak, or mis-expressed.
A usable, provable pair is therefore the outcome of an iterative co-evolution, not of a single synthesis pass.
Methods that construct proofs~\cite{baldur-fse23,proverbot9001,coqpilot-ase24,palm-ase24,cobblestone-icse26,rango-icse25,synver-ase25,slvc-icml26} of stated theorems take the specification as given, and methods that synthesize contracts~\cite{preguss,cav24-autospec,icse25-specgen}---typically validated by auto-active backends rather than an ITP kernel---stop before a machine-checked body proof.
~We develop this first challenge in \Cref{sec:motivation-1-func-verif}.\wzycomment{Challenge 1 reframed (2026-08-21): from "intended-contract synthesis" to "function-level funspec+body-proof co-evolution"; the per-anchor detail (bottom-up / top-down / human-reviewed NL anchor / phrasing / proof feedback) and the concrete \texttt{node\_free} case move to \S2.1. Comparison points (AutoSpec, QCP, LLM--VeriFast, Spec-Agent) remain Related Work material, per the coevolution design note.}

Second, \emph{how to establish library-level conformance---that the exported interfaces, taken together, conform to the intended requirements?}
A \emph{library} is a collection of modules, and the specifications of its externally callable functions (e.g., typically the module-entry functions) form its \emph{interfaces}.
An \emph{open library} has no designated client functions (e.g., a \texttt{main}), a category that a substantial part of real-world codebases falls into~\cite{VSU-esop21,permissive-interface-fse05}.
Verifying an open library therefore raises a common difficulty:
proofs can establish that each module satisfies its interface, but the requirements of a library are often stated over the order in which its interfaces may be called, and per-module proofs do not by themselves show that those interfaces, taken together, support the call sequences the requirements expect.
Yet existing LLM-assisted methods stop at function-level verification, leaving whether the library's interfaces, taken together, conform to the requirements an open problem that typically becomes a manual-review obligation. We develop this second challenge in \Cref{sec:motivation-2-inter-conform}.
\wzycommentinline{Polished till here.}

In response to these challenges, we present \ours, an LLM-assisted framework for building machine-checked assurance cases from C codebases to their requirements.
To model intended cross-interface use in open libraries, \ours constructs an \emph{interface protocol} that describes permitted calls, their possible outcomes, and the resources carried between them.
The protocol guides the synthesis of composable interfaces and exposes their call sequences and assumptions for review.
Its soundness theorem provides a conditional safety guarantee under verified function contracts and the stated caller obligations.

\ours coordinates specification synthesis and proof construction in two complementary phases.
The first phase uses top-down analysis to derive natural-language function contracts from requirements and caller demands, then formalizes these contracts bottom-up and constructs a candidate interface protocol.
The second phase constructs modular proofs that function bodies satisfy their contracts and that the contracts support the protocol.
Proof feedback drives revisions to the specifications, protocol, and proofs, rather than treating the initial specifications as fixed.
This specification--proof co-evolution connects function-level verification to the library-level requirements expressed through the protocol.

We implement \ours using VST in Rocq and apply it to six C libraries, including four cryptographic components of openHiTLS.
The delivered proofs establish memory safety and leak freedom for all 299 function definitions under disclosed contracts and assumptions, with less than one person-day of reported human effort per benchmark.
Machine checking validates the formal evidence; human review supplies the conformance judgments linking the specifications and protocol to the intended requirements.

\paragraph{Contributions.}
\begin{itemize}
\item An end-to-end assurance-case framework that automates the joint construction of specifications and proofs through a two-phase method, with explicit human conformance review.
\item A resource-aware interface protocol that models intended cross-interface use, guides the synthesis of composable interfaces, and provides a conditional safety guarantee for protocol-following calls.
\item An implementation and evaluation on six C benchmarks, verifying memory safety and leak freedom for all 299 function definitions with less than one person-day of reported human effort per benchmark.
\end{itemize}

\paragraph{Paper Structure.}
After the motivating examples (\Cref{sec:motivation}), we present the assurance case (\Cref{sec:assurance-case}), interface protocol (\Cref{sec:central-protocol}), and two-phase CCV framework (\Cref{sec:framework}).
We then report the evaluation (\Cref{sec:evaluation}), discuss related work (\Cref{sec:related-work}), and conclude (\Cref{sec:conclusion}).


\section{Background and Motivation}\label{sec:motivation}

\subsection{Function-Level Verification}\label{sec:motivation-1-func-verif}

Function-level verification is the process of producing, jointly, a \emph{specification} that states what the given function is meant to do and a machine-checked \emph{proof} that the body inhabits that specification. 
For example, the function (\cref{fig:motivation-1-function}) walks a linked list and returns its length.
\Cref{fig:motivation-1-spec} shows the shape of a specification in {\vst}: it binds a list of logic variables (\texttt{WITH}) and then states a precondition (\texttt{PRE}) and a postcondition (\texttt{POST}).
Each of \texttt{PRE} and \texttt{POST} carries two groups of assertions: a \texttt{PROP} clause, a conjunction of \emph{pure} logical propositions that do not depend on the heap, and a \texttt{SEP} clause, a separating conjunction of \emph{spatial} predicates describing the heap fragments the function requires (in \texttt{PRE}) or returns (in \texttt{POST}).
The \texttt{PARAMS} and \texttt{GLOBALS} clauses name the formal parameters and globals, and \texttt{RETURN} names the return value.
The proof (\cref{fig:motivation-1-proof}) is a Hoare-style judgment of the lemma \texttt{verify\_list\_length}, asserting that the body inhabits its specification.
In {VST} the proof is built interactively by \emph{forward symbolic execution}: a tactic (e.g., \texttt{forward}, \texttt{forward\_loop}, \texttt{forward\_if}) advances the symbolic proof state---the local environment together with the heap---past one statement at a time, in the manner of a strongest-postcondition transformer; once the whole body has been consumed, the remaining goals are closed by pure reasoning.

The bottleneck of function-level verification lies in \emph{the inherent coupling of constructing the specification and the proof}. The nature of this coupling is best seen through a concrete failure.
Take the \setlength{\fboxsep}{0pt}\colorbox{red!18}{initial} contract of \cref{fig:motivation-1-spec}, which imposes no pure-logic constraints on the input list (i.e., \setlength{\fboxsep}{0pt}\colorbox{red!18}{\texttt{PROP ()}} in the precondition).
It looks unobjectionable: the implementation obviously counts the elements, and the postcondition states exactly this. The proof, however, does not go through.
\texttt{forward\_loop} on the \texttt{while} loop yields the three standard obligations---establishment, preservation, and termination of the inductive loop invariant---which are discharged routinely.
The trouble appears at the assignment \setlength{\fboxsep}{0pt}\colorbox{orange!24}{\texttt{c = c + 1}}: here \setlength{\fboxsep}{0pt}\colorbox{orange!24}{\texttt{forward}} generates a signed-overflow proof goal, requiring that the incremented value still fit in an \texttt{int}, since an overflowing addition is undefined behavior (UB) in C. To proceed, one must \setlength{\fboxsep}{0pt}\colorbox{green!18}{revise} the specification, strengthening its precondition with the bound (\setlength{\fboxsep}{0pt}\colorbox{green!18}{\texttt{Zlength xs <= Int.max\_signed}}); under this refinement the overflow goal closes and an ordinary loop proof succeeds. \emph{The specification is thus shaped by the proof, and the proof can advance only as far as the specification permits---the two co-evolve, each constraining and refining the other.}

The difficulty is far more severe in real-world verification, where contracts are far richer than a single bound and the point at which a proof stalls is seldom as local as one assignment: determining whether an obstruction calls for a stronger hypothesis, a sharper invariant, a different specification shape, or a genuinely weaker requirement is a judgment made against the proof in flight.
The challenge is thus not to produce a specification or a proof in isolation, with the other taken as given (which is what existing works~\cite{preguss,cav24-autospec,icse25-specgen,baldur-fse23,proverbot9001,coqpilot-ase24,palm-ase24,cobblestone-icse26,rango-icse25,synver-ase25,slvc-icml26} focus on), but to co-evolve them:
\begin{challenge}
\label{cha:function-level-verification}
How to co-evolve the specification and the proof, so as to verify each function?
\end{challenge}

\begin{figure}[t]
\centering
\begin{subfigure}{0.47\columnwidth}
\centering
\fbox{\ttfamily\footnotesize
\begin{tabular}{@{}l@{}}
  \texttt{\textbf{struct} node \{ \textbf{int} val; \textbf{struct} node *next; \}}\\
  \texttt{\textbf{struct} list \{ \textbf{struct} node *head; \}}\\
  \\
  \texttt{\textbf{int} list\_length(\textbf{struct} list *l) \{ }\\
  \texttt{\ \ \textbf{int} c = 0; \textbf{struct} node *cur = l->head;}\\
  \texttt{\ \ \textbf{while} (cur != NULL) \{ }\\
  {\ \ \ \ \setlength{\fboxsep}{0pt}\colorbox{orange!24}{\texttt{c = c + 1; \ /*}\ \normalfont overflow risk\ \texttt{*/}}}\\
  \texttt{\ \ \ \ cur = cur->next;}\\
  \texttt{\ \ \}}\\
  \texttt{\ \ \textbf{return} c;}\\
  \texttt{\}}\\
\end{tabular}}
\caption{The \texttt{list\_length} function in C.}
\label{fig:motivation-1-function}
\end{subfigure}
\hfill
\begin{subfigure}{0.52\columnwidth}
\centering
\fbox{\ttfamily\footnotesize
\begin{tabular}{@{}l@{\hspace{0em}}l@{}}
\multicolumn{2}{l}{\ttfamily (* \normalfont two versions: the {\smash{\setlength{\fboxsep}{0pt}\colorbox{red!18}{initial (-)}}} and the {\smash{\setlength{\fboxsep}{0pt}\colorbox{green!18}{revised (+)}}} one\ttfamily\ *)}\\
  & \textbf{Definition} list\_length\_spec := \\
  & \ \ DECLARE \_list\_length \\
  & \ \ WITH l: val, xs: list Z \\
  & \ \ PRE  [ tptr t\_list ] \\
  \multicolumn{2}{l}{{\setlength{\fboxsep}{0pt}\colorbox{red!18}{\hspace{0.2em}\texttt{-}\hspace{0.7em}\ PROP ()\ \ \ \ \ \ \ \ \ \ \ \ \ \ \ \ \ \ \ \ \ \ \ \ \ \ \ \ \ \ \ \ \ \ \ \ }}} \\
  \multicolumn{2}{l}{{\setlength{\fboxsep}{0pt}\colorbox{green!18}{\hspace{0.2em}\texttt{+}\hspace{0.7em}\ PROP (Zlength xs <= Int.max\_signed)\ \ \ \ \ \ \ }}} \\
  & \ \ \ \ \ PARAMS (l) GLOBALS () SEP (list\_rep l xs)\\
  & \ \ POST [ tint ] PROP () \\
  & \ \ \ \ \ RETURN (Vint (Int.repr (Zlength xs)))\\
  & \ \ \ \ \ SEP (list\_rep l xs). \\
\end{tabular}}
\caption{The \texttt{list\_length} specifications in {\vst}.}
\label{fig:motivation-1-spec}
\end{subfigure}
\\[1.2ex]
\begin{subfigure}{1.0\columnwidth}
\centering
\fbox{\ttfamily\footnotesize
\begin{tabular}{@{}l@{}}
  \textbf{Lemma} verify\_list\_length : semax\_body Vprog Gprog f\_list\_length list\_length\_spec.\\
  \textbf{Proof}.\ start\_function.\ forward.\ (* c = 0 *) forward.\ (* cur = l->head *) \\
  forward\_loop (EX front rest n,\ PROP (xs = front ++ rest; Int.signed n = Zlength front)\\
  \ \ \ \ SEP (seg h front cur; node\_chain cur rest)) break: (EX rest n,\ PROP (...)).\%assert\\
  \ \ \ {\ttfamily (* \normalfont loop goals: establishment / preservation / exit\ttfamily\ *)}\\
  \ \ \ - {\ttfamily (* \normalfont establishment\ttfamily\ *)} \{...\}\\
  \ \ \ - {\ttfamily (* \normalfont preservation\ttfamily\ *)} forward\_if.\\
  \ \ \ \ \ + {\ttfamily (* cur $\neq$ NULL\ttfamily\ *)} {\setlength{\fboxsep}{0pt}\colorbox{orange!24}{forward.\ {\ttfamily (* c = c + 1\ttfamily\ *)} \{...\}}}\ {\ttfamily (* \normalfont overflow check\ttfamily\ *)}\\
  \ \ \ \ \ \ \ forward.\ {\ttfamily (* cur = cur->next\ttfamily\ *)} \{...\}\\
  \ \ \ \ \ + {\ttfamily (* cur $=$ NULL\ttfamily\ *)} forward.\ {\ttfamily (* \normalfont loop exit\ttfamily\ *)} \{...\}\\
  \ \ \ - {\ttfamily (* \normalfont exit: reconstruct the list and return its length\ttfamily\ *)} \{...\}\\
Qed.
\end{tabular}}
\caption{The simplified proof in {\vst}; the \setlength{\fboxsep}{0pt}\colorbox{orange!24}{highlighted tactics} discharge the overflow risk at \texttt{c = c + 1}.}
\label{fig:motivation-1-proof}
\end{subfigure}
\caption{Function-level verification of \texttt{list\_length}: (a) the function body which contains a potential overflow, (b) its specifications in two versions (with the \setlength{\fboxsep}{0pt}\colorbox{red!18}{initial} and \setlength{\fboxsep}{0pt}\colorbox{green!18}{revised} differences highlighted), and (c) the proof against its revised specification.}
\label{fig:motivation-1}
\end{figure}

\begin{figure}[t]
\centering
\begin{subfigure}{1.0\columnwidth}
\centering
\setlength{\fboxsep}{0.2em}\fbox{\ttfamily\footnotesize
\begin{tabular}{@{}l@{}}
  \texttt{struct list *list\_new(void);} \hfill \texttt{/* allocate an empty list */} \\
  \texttt{void list\_free(struct list *l);} \hfill \texttt{/* release a list */} \\
  \texttt{void list\_push(struct list *l, int v);} \hfill \texttt{/* insert a node with value v at the head; length +1 */} \\
  \texttt{void list\_remove(struct list *l, int v);} \hfill \texttt{/* delete a node with v; do nothing if missed */} \\
  \texttt{int list\_length(struct list *l);} \hfill \texttt{/* return the length of l, which is required to <= INT\_MAX */} \\
\end{tabular}}
\caption{The five function declarations of the linked-list library (definitions elided).}
\label{fig:motivation-2-library}
\end{subfigure}
\\[1.2ex]
\begin{subfigure}{0.465\columnwidth}
\centering
\setlength{\fboxsep}{0.2em}\fbox{\ttfamily\footnotesize
\begin{tabular}{@{}l@{}}
  \textbf{Definition} list\_push\_spec := \\
  \ \ DECLARE \_list\_push \\
  \ \ WITH l: val, v: Z, xs: list Z, gv: globals \\
  \ \ PRE  [ tptr t\_list; tint ] \\
  \ \ \ PROP (repable\_signed v) \\
  \ \ \ PARAMS (l; Vint (Int.repr v)) GLOBALS (gv) \\
  \ \ \ SEP (list\_rep l xs; mem\_mgr M gv) \\
  \ \ POST [ tvoid ] PROP (...) \\
  \multicolumn{1}{@{}l}{{\setlength{\fboxsep}{0pt}\colorbox{red!18}{\hspace{0.2em}\texttt{-}\hspace{0.4em}\ SEP (list\_rep l ys; mem\_mgr M gv)\ \ \ \ \ \ \ }}} \\
  \multicolumn{1}{@{}l}{{\setlength{\fboxsep}{0pt}\colorbox{green!18}{\hspace{0.2em}\texttt{+}\hspace{0.4em} SEP (list\_rep l (v :: xs); mem\_mgr M gv)}}} \\
\end{tabular}}
\caption{The simplified \texttt{list\_push} specifications.}
\label{fig:motivation-2-push-spec}
\end{subfigure}
\hfill
\begin{subfigure}{0.525\columnwidth}
\centering
\setlength{\fboxsep}{0.2em}\fbox{\ttfamily\footnotesize
\begin{tabular}{@{}l@{}}
  \textbf{Definition} list\_remove\_spec := \\
  \ \ DECLARE \_list\_remove \\
  \ \ WITH l: val, v: Z, xs: list Z, gv: globals \\
  \ \ PRE  [ tptr t\_list; tint ] \\
  \ \ \ PROP (repable\_signed v) \\
  \ \ \ PARAMS (l; Vint (Int.repr v)) GLOBALS (gv) \\
  \ \ \ SEP (list\_rep l xs; mem\_mgr M gv) \\
  \ \ POST [ tvoid ] PROP (...) \\
  \multicolumn{1}{@{}l}{{\setlength{\fboxsep}{0pt}\colorbox{red!18}{\hspace{0.2em}\texttt{-}\hspace{0.4em}\ SEP (list\_rep l ys; mem\_mgr M gv)\ \ \ \ \ \ \ \ \ \ \ \ \ }}} \\
  \multicolumn{1}{@{}l}{{\setlength{\fboxsep}{0pt}\colorbox{green!18}{\hspace{0.2em}\texttt{+}\hspace{0.4em} SEP (list\_rep l (le\_or\_id v xs); mem\_mgr M gv)}}} \\
\end{tabular}}
\caption{The simplified \texttt{list\_remove} specifications.}
\label{fig:motivation-2-remove-spec}
\end{subfigure}
\caption{The linked-list library exposes five functions (a). Both versions of the \texttt{list\_push} and \texttt{list\_remove} specifications (b)(c), with the \setlength{\fboxsep}{0pt}\colorbox{red!18}{initial} and \setlength{\fboxsep}{0pt}\colorbox{green!18}{revised} differences highlighted, are provable; only the revised (stronger) ones conform to the intended requirements.}
\label{fig:motivation-2}
\end{figure}

\subsection{Library-Level Conformance}\label{sec:motivation-2-inter-conform}

Suppose now that the difficulty of \Cref{sec:motivation-1-func-verif} has been overcome: every function of the library is proved to satisfy its contract. Yet such function-level verification does not guarantee that the library conforms to the intended requirements, i.e., \emph{library-level conformance}, which is a \emph{cross-interface} property. It constrains how a call to one interface\footnote{Here and below, ``calling an interface'' is shorthand for calling the corresponding externally callable function; formally, the interface is that function's specification.} connects to a call to the next, which in a {\vst} proof appears as whether the pre- and post-conditions of a legal call sequence can be made to meet.

The library in \cref{fig:motivation-2-library} has five interfaces. \texttt{list\_length} is verified against the revised contract of \cref{fig:motivation-1-spec}, \texttt{list\_push} and \texttt{list\_remove} are proved to satisfy their specifications in \cref{fig:motivation-2-push-spec,fig:motivation-2-remove-spec}, and the specifications of \texttt{list\_new} and \texttt{list\_free} are elided. Consider the client function in \Cref{fig:motivation-client},
which follows a call sequence the design intends to be legal.
Yet in the specifications of \cref{fig:motivation-2-push-spec,fig:motivation-2-remove-spec}, where each function has been individually verified, this client function may still not be provable, due to a \setlength{\fboxsep}{0pt}\colorbox{orange!24}{precondition violation}. This is because the postconditions of \texttt{list\_push} and \texttt{list\_remove} are too weak to meet the precondition of \texttt{list\_length} at the assignment \setlength{\fboxsep}{0pt}\colorbox{orange!24}{\texttt{int n = list\_length(l)}}. Specifically, the separating spatial predicates (i.e.,  
\begin{wrapfigure}{r}{0.31\columnwidth}
\centering
\ttfamily\footnotesize
\begin{tabular}{@{}l@{}}
\texttt{void client() \{}\\
  \texttt{\ \ list *l = list\_new();}\\
  \texttt{\ \ list\_push(l, 1);}\\
  \texttt{\ \ list\_push(l, 2);}\\
  \texttt{\ \ list\_remove(l, 1);}\\
  \texttt{\ \ }\setlength{\fboxsep}{0pt}\colorbox{orange!24}{\texttt{/* }\normalfont precondition-violation risk\texttt{ */}}\\
  \texttt{\ \ \setlength{\fboxsep}{0pt}\colorbox{orange!24}{int n = list\_length(l);}}\\
  \texttt{\ \ list\_free(l);}\\
  \texttt{\}}\\
\end{tabular}
\caption{A legal client function.}
\label{fig:motivation-client}
\end{wrapfigure}
\setlength{\fboxsep}{0pt}\colorbox{red!18}{\texttt{SEP (list\_rep l ys; ...)}}) of the initial postconditions of \texttt{list\_push} and \texttt{list\_remove} state that the list at the returning point is \emph{some} list \texttt{ys}, unrelated to the input \texttt{xs}, while \texttt{list\_length} requires the length of the input length to fit in the int range (\cref{fig:motivation-1-spec}).
Both the initial interfaces in \cref{fig:motivation-2-push-spec,fig:motivation-2-remove-spec} are provable---take \texttt{ys} to be the actual result---yet a caller learns nothing about the current length of the list.
To discharge the \emph{precondition} of \texttt{list\_length}, one must accumulate the length along the sequence; from a postcondition that names an arbitrary \texttt{ys} this is impossible.
The \setlength{\fboxsep}{0pt}\colorbox{green!18}{revised} separating predicates in the postconditions repair this: \texttt{list\_push} returns \setlength{\fboxsep}{0pt}\colorbox{green!18}{\texttt{v :: xs}}---the list with \texttt{v} at the head, and \texttt{list\_remove} returns \setlength{\fboxsep}{0pt}\colorbox{green!18}{\texttt{le\_or\_id v xs}}---the list with its first \texttt{v} removed, or the list itself if \texttt{v} is absent.
Each is still provable of its body; the difference is that the revised strong forms carry exactly the length information a sequence of calls needs.

In a real open library the requirements are a collection of properties---memory safety, absence of memory leaks, functional correctness---each stated over, and checked against, the \emph{sequences} of interface calls that a client may make. 
Thus the library-level conformance is established by not only verifying each function, but also ensuring the strength of the interfaces\wzycomment{The word ``strength'' is not appropriate, since a sequence may fail due to too strong interfaces.}. 
Yet existing LLM-assisted methods~\cite{preguss,cav24-autospec,icse25-specgen,baldur-fse23,proverbot9001,coqpilot-ase24,palm-ase24,cobblestone-icse26,rango-icse25,synver-ase25,slvc-icml26} focus on the former, leaving the latter as:
\begin{challenge}
\label{cha:library-level-conformance}
How to reconcile the interfaces of a library to conform to its requirements?
\end{challenge}


\begin{figure}[t]
\centering
\includegraphics[width=\columnwidth]{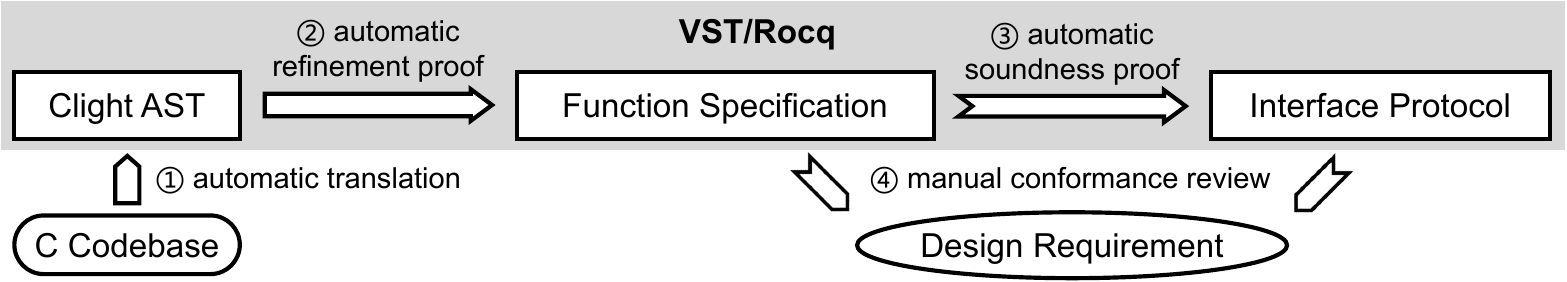}
\caption{The assurance case built by {\ours}.}
\label{fig:assurance-case}
\end{figure}

\section{The Assurance Case}\label{sec:assurance-case}

\Cref{sec:motivation} developed the two primary challenges in constructing an end-to-end \emph{assurance case}~\cite{iso-15026,assurance-case} for a C codebase.
This is precisely what {\ours} builds: \emph{a structured, auditable set of artifacts supporting the claim that the C codebase meets its design requirements}.

\Cref{fig:assurance-case} gives an overview of this assurance case. It has five components connected by four kinds of relations; the three components and two relations in the shaded region are \emph{formal artifacts} that {\ours} produces automatically and VST/Rocq checks.
The C codebase is the input under verification, while the requirements reflect the design intent.
Starting from the codebase, CompCert's front end mechanically translates the C source into a Clight AST~\cite{compcert-cacm09} (\Circled{1}).
{\ours} generates specifications for the Clight functions and proofs that the functions satisfy those specifications (\Circled{2}).
For an open library, {\ours} also constructs an interface protocol to describe the interface-call sequences permitted by the requirements, and proves that every interface-call sequence accepted by the protocol is safe with respect to the verified function specifications (\Circled{3}).
The remaining manual obligation in the assurance case is to review whether the function specifications and the interface protocol conform to the requriements (\Circled{4}), for instance whether a precondition is too strong or the protocol omits a legal and safe interface-call sequence.
We formalize the interface protocol in \cref{sec:central-protocol}, and explain how {\ours} constructs this assurace case in \cref{sec:framework}.


\Cref{fig:assurance-case} depicts the assurance case for an open library, which has no designated client within the verification scope.
When a library instead contains a client (e.g., a \texttt{main} function), that in-scope client supplies the concrete calling context that an open library lacks.
{\ours} therefore replaces the interface protocol with a client-level specification and whole-program evidence; the other four components and three relations (\Circled{1}\Circled{2}\Circled{4}), remain.

\section{The Interface Protocol for Open Libraries}\label{sec:central-protocol}

As \Cref{sec:motivation-2-inter-conform} showed, function-level verification alone does not establish that a library's interfaces conform to its requirements.
The root difficulty is that requirements are abstract---their intended content is a set of legal interface-call sequences---whereas an open library has no client whose control flow could delimit that set.
Verifying \emph{library-level conformance} therefore requires the expert to (i) determine which interface-call sequences the requirements permit and (ii) ensure that the interfaces along those sequences compose, reconciling them where they do not.
To support both tasks, \ours constructs an \emph{interface protocol}: a state-based model that makes the intended interface-call sequences and their assumptions explicit for review and guides the synthesis and refinement of the constituent interfaces.

\subsection{Protocol States and Behaviors}\label{sec:formal-def}

\begin{definition}[Interface]\label{def:interface}
Let $\mathbb F$ be the set of externally callable functions in an open library. For each $f\in\mathbb F$, its \emph{interface} $I_f$ is the specification
\begin{equation}\label{eq:interface-schema}
  I_f \defeq
  \forall w\in W_f.\;
  \bigl\{P_f(w)\bigr\}\;
  f\bigl(\mathsf{args}_f(w)\bigr)\;
  \bigl\{\exists o\in O_f.\;Q_f(w,o)\bigr\},
\end{equation}
where
\[
  \mathsf{args}_f : W_f\to\mathsf{CArgs}_f,\quad
  P_f : W_f\to\mathsf{Prop}\times\mathsf{MPred},\quad
  Q_f : W_f\times O_f\to\mathsf{Prop}\times\mathsf{MPred}.
\]
Here $\mathsf{CArgs}_f$ is the domain of concrete C argument tuples, $\mathsf{Prop}$ the pure propositions, and $\mathsf{MPred}$ the separation-logic resource predicates.
$W_f$ contains the caller-instantiated logical variables, e.g.\ target objects, abstract data, and ownership parameters, while $\mathsf{args}_f$ projects them onto the actual C arguments; $O_f$ contains the post-call logical values associated with a return value or an out-parameter.
The pre- and postconditions $P_f(w)$ and $Q_f(w,o)$ each consist of a pure proposition and a spatial resource predicate.
\qedT
\end{definition}

\begin{example}\label{ex:length}
To instantiate \cref{eq:interface-schema} for \texttt{list\_length} in \cref{fig:motivation-1-spec}, take
$W_{\mathsf{length}}=\mathsf{Val}\times\mathsf{list}\ \mathbb Z$ (the domain of \texttt{WITH} logical variables), $O_{\mathsf{length}}=\mathbb Z$, $w=(l,xs)$, $o=n$, $\mathsf{args}_{\mathsf{length}}(l,xs)=(l)$, and
\[
\begin{gathered}
P_{\mathsf{length}}(w)=
  (\mathsf{Zlength}\ xs\leq\mathsf{Int.max\_signed},\;\mathsf{list\_rep}\ l\ xs),\\
Q_{\mathsf{length}}(w,o)=
  (\mathsf{return}=n\land n=\mathsf{Zlength}\ xs,\;\mathsf{list\_rep}\ l\ xs).
\end{gathered}
\]
Here $\mathsf{return}$ denotes the mathematical integer represented by the concrete C return value.
The two components of $P_{\mathsf{length}}$ correspond to VST's precondition \texttt{PROP} and \texttt{SEP} clauses. For $Q_{\mathsf{length}}$, the pure component additionally incorporates the decoded \texttt{RETURN} constraint.
\qedT
\end{example}
\begin{definition}[Interface Protocol]\label{def:interface-protocol}
Given the interface functions $\mathbb F$ of an open library, its \emph{interface protocol} is the five-tuple
\begin{equation}\label{eq:protocol-tuple}
  \mathcal P
  \defeq
  (\mathbb F,\mathbb S,\llbracket\cdot\rrbracket,\mathbb G,\mathbb B).
\end{equation}
Let $\mathbb S_G$, $\mathbb S_B$, and $\mathbb S_L$ be the state domains for the library's global variables, resources currently borrowed from clients, and library-managed objects, respectively. The state space is
\begin{equation}\label{eq:protocol-state-space}
  \mathbb S
  \defeq
  \mathbb S_G\times\mathbb S_B\times\mathbb S_L.
\end{equation}
A state $\sigma=(\gamma,\beta,\ell)\in\mathbb S$ records these three components from one interface call to the next. Its separation-logic interpretation $\llbracket\cdot\rrbracket:\mathbb S\to\mathsf{MPred}$ is
\begin{equation}\label{eq:state-interpretation}
  \llbracket\sigma\rrbracket
  \defeq
  \mathsf{Inv}(\sigma)\land
  \bigl(\mathsf{GlobalRep}(\gamma)
  * \mathsf{BorrowRep}(\beta)
  * \mathsf{LibObjRep}(\ell)\bigr).
\end{equation}
Here $\mathsf{Inv}:\mathbb S\to\mathsf{Prop}$ is a library-maintained invariant over the whole configuration, read as a pure conjunct of the resource assertion; it may be $\mathsf{True}$ when no additional invariant is needed.
The interpretation in each edge pre- and postcondition requires this invariant to hold before the call and to be re-established afterward.

For each $f\in\mathbb F$, let $\mathsf{Obs}_f$ be the domain of \emph{observable values}---returned by $f$ or written through its out-parameters (c.f. $O_f$).
A \emph{branch} describes one possible result of a call to $f$: the next protocol state, its observable values, a pure fact about the result, and a residual resource assertion not represented by that state. Formally, a branch and its type are written as
\begin{equation}\label{eq:branch}
  br\defeq(br.\mathsf{next},br.\mathsf{obs},br.\mathsf{prop},br.\mathsf{res})
  \in\mathsf{Branch}_f,\quad
  \mathsf{Branch}_f\defeq
  \mathbb S\times\mathsf{Obs}_f\times\mathsf{Prop}\times\mathsf{MPred}.
\end{equation}
The last two components in \cref{eq:protocol-tuple} are the per-function families
\[
\begin{aligned}
  \mathbb G&=\{G_f\mid f\in\mathbb F\},
  &G_f&:\mathbb S\times W_f\to\mathsf{Prop},\\
  \mathbb B&=\{\mathsf{Br}_f\mid f\in\mathbb F\},
  &\mathsf{Br}_f&:\mathbb S\times W_f\times O_f\to 2^{\mathsf{Branch}_f}.
\end{aligned}
\]
Here $G_f(\sigma,w)$ is the pure guard for calling $f$, while $\mathsf{Br}_f(\sigma,w,o)$ gives the possible branches.
Branches may have the same observable values but different successor states; the protocol does not require observational determinacy.
\qedT
\end{definition}

\begin{example}\label{ex:linked-list-components}
For the simplified linked-list protocol of \cref{fig:motivation-2}, its $\mathbb S_G=\{\mu_{\mathsf{mgr}}\}$ and $\mathbb S_B=\{\emptyset\}$ are trivial singleton domains, the sole values of which represent the allocator resource $\mathsf{mem\_mgr}\;M\;gv$ and no tracked borrowed resource, respectively.
Take $\mathsf{Inv}(\sigma)=\mathsf{True}$ for this example.
Its $\mathbb S_L$ can be instantiated as maps from live handles to abstract lists, i.e.\ $\ell=\{p\mapsto xs,\ldots\}\in\mathbb S_L$.
Then a state can be interpreted via
\[
\begin{gathered}
  \llbracket \sigma\rrbracket
  = \llbracket (\mu_{\mathsf{mgr}}, \emptyset, \ell)\rrbracket
  \defeq \mathsf{mem\_mgr}\;M\;gv * \mathsf{emp} *
    \mathop{\mbox{\large $*$}}\limits_{p\mapsto xs\in \ell}
    \mathsf{list\_rep}\;p\;xs.
\end{gathered}
\]
Thus the state carries the allocator resource together with the concrete representations of all live library-managed lists.
For the \texttt{list\_length} interface of \Cref{ex:length}, using the same decoded-return convention, take $\mathsf{Obs}_{\mathsf{length}}=\mathbb Z$ and
\[
\begin{aligned}
  G_{\mathsf{length}}((\mu_{\mathsf{mgr}}, \emptyset, \ell),(p,xs))
    &\defeq
      \underbrace{p\mapsto xs\in \ell}_{\text{state membership}}
      \land
      \underbrace{\mathsf{Zlength}\;xs\leq\mathsf{Int.max\_signed}}_{\text{length bound}},\\
  \mathsf{Br}_{\mathsf{length}}(\sigma,(p,xs),n)
    &\defeq
      \bigl\{(\sigma,n,n=\mathsf{Zlength}\;xs,\mathsf{emp})\bigr\}.
\end{aligned}
\]
The guard conjoins the interface's pure length bound with membership in the protocol state. It is neither the whole $P_{\mathsf{length}}$ nor just its pure component: it omits the spatial predicate $\mathsf{list\_rep}\;p\;xs$ and adds state membership; $\llbracket\sigma\rrbracket$ supplies the omitted predicate.
For fixed $n$, $\mathsf{Br}_{\mathsf{length}}(\sigma,(p,xs),n)$ contains one branch because \texttt{list\_length} has no further runtime case split. Its four fields say that the state is unchanged, the observed return is $n$, $n=\mathsf{Zlength}\;xs$, and no resource remains outside the successor-state interpretation.
\qedT
\end{example}

\begin{definition}[Protocol Transition]\label{def:protocol-transition}
For $f\in\mathbb F$, $w\in W_f$, and $\sigma,\sigma'\in\mathbb S$, the protocol-transition judgment is defined by
\begin{equation}\label{eq:v3-transition}
  \sigma\xrightarrow{f(w)}\sigma'
  \quad\Longleftrightarrow\quad
  G_f(\sigma,w)\land
  \exists o\in O_f,\,br\in\mathsf{Br}_f(\sigma,w,o).\;
  br.\mathsf{next}=\sigma'.
\end{equation}
It states that calling $f$ with logical input $w$ is permitted at $\sigma$ and may produce $\sigma'$; $w$ is explicit, while the implementation-selected outcome and branch remain existential.
\qedT
\end{definition}

\begin{example}\label{ex:linked-list-transition}
For the state $\sigma=(\mu_{\mathsf{mgr}},\emptyset,\ell)$ in \Cref{ex:linked-list-components}, \Cref{def:protocol-transition} requires both the guard and a branch whose next state is $\sigma$. The latter is witnessed by choosing $n=\mathsf{Zlength}\;xs$ and the singleton branch defined in \Cref{ex:linked-list-components}. Substituting the guard gives the following self-transition, which exists exactly when the target list is recorded in $\ell$ and the length bound holds:
\[
  \sigma\xrightarrow{\mathsf{length}(p,xs)}\sigma
  \quad\Longleftrightarrow\quad
  p\mapsto xs\in\ell
  \land
  \mathsf{Zlength}\;xs\leq\mathsf{Int.max\_signed}.\qedT
\]
\end{example}

\subsection{Protocol Edges and Subsumption}\label{sec:cp-edges}

The protocol-transition judgment declares an abstract state change, but does not show that the implementation of $f$ supports it.
Conversely, the interface $I_f$ is indexed only by $f$ and does not describe how one call transforms the resources represented by a particular source state.
A \emph{protocol edge} connects the two levels by giving a source-state-specific specification of the same function.

\begin{definition}[Protocol Edge]\label{def:protocol-edge}
For a source state $\sigma$ and interface function $f$, define the \emph{edge} 
\begin{equation}\label{eq:edge-schema}
  E_{\sigma,f}
  \defeq
  \forall w\in W_f.\;
  \bigl\{P^E_{\sigma,f}(w)\bigr\}\;
  f\bigl(\mathsf{args}_f(w)\bigr)\;
  \bigl\{Q^E_{\sigma,f}(w)\bigr\},
\end{equation}
which is a Hoare triple. Its distinction from \cref{def:interface} is that the pre- and postcondition are bound to $(\sigma, f)$ rather than $f$, specifically,
\begin{equation}\label{eq:edge-conditions}
\begin{aligned}
  P^E_{\sigma,f}(w)
    &\defeq
      \bigl(G_f(\sigma,w),\;
            \llbracket\sigma\rrbracket * \mathsf{BorrowIn}_f(w)\bigr),\\
  Q^E_{\sigma,f}(w)
    &\defeq
      \exists o\in O_f,\,br\in\mathsf{Br}_f(\sigma,w,o).\;
      \bigl(br.\mathsf{prop}\land\mathsf{obs}=br.\mathsf{obs},\;
            \llbracket br.\mathsf{next}\rrbracket * br.\mathsf{res}\bigr).
\end{aligned}
\end{equation}
Here $\mathsf{BorrowIn}_f:W_f\to\mathsf{MPred}$ describes resources borrowed from the caller for this call, in addition to those represented by $\sigma$; it is $\mathsf{emp}$ when no additional resources are required.
On return, these resources may be retained in the successor state's $\mathbb S_B$ component or returned to the caller through $br.\mathsf{res}$.
The complete spatial postcondition is $\llbracket br.\mathsf{next}\rrbracket * br.\mathsf{res}$: the first conjunct records resources tracked by the successor state, and the second contains the residual resources outside that state.
The value $\mathsf{obs}\in\mathsf{Obs}_f$ denotes the observable values produced by the call.
\wzycomment{After the arXiv draft, extend \texttt{list\_remove} with an out-parameter and update the code, specifications, and examples in Sections 2--4 to illustrate $\mathsf{BorrowIn}_f$ and $br.\mathsf{res}$.}
\qedT
\end{definition}

An edge is not one transition.
It summarizes all branches declared for the same $(\sigma,f)$ pair: its precondition fixes the source resources and caller obligations, while its postcondition accounts for every implementation outcome by an observable branch and a successor state.
The transition judgment in \cref{eq:v3-transition} is the pure, resource-erased projection of this common guard-and-branch description.
Resources tracked after the call appear in the successor interpretation; every remaining spatial POST resource appears in $br.\mathsf{res}$.

\begin{definition}[Per-Edge Subsumption]\label{def:edge-subsumption}
For every enabled $(\sigma,f)$ pair, the function interface $I_f$ must \emph{subsume} $E_{\sigma,f}$, written $\mathsf{Sub}(I_f,E_{\sigma,f})$.
After matching their logical parameters, the two essential assertion-level obligations exhibit a frame $F$ such that, for every $w\in W_f$,
\begin{equation}\label{eq:subsumption-obligations}
  P^E_{\sigma,f}(w)\ \vdash\ F * P_f(w),
  \qquad
  F * \bigl(\exists o\in O_f.\;Q_f(w,o)\bigr)
  \ \vdash\ Q^E_{\sigma,f}(w).
\end{equation}
The first entailment establishes that an admissible edge call supplies the interface precondition; the second requires every interface outcome, recombined with the frame, to fit a declared branch of the edge.
Concrete separation-logic systems impose additional signature, typing, and bookkeeping conditions; \cref{eq:subsumption-obligations} isolates the resource-flow content used here.
\qedT
\end{definition}

The same $I_f$ is compared with $E_{\sigma,f}$ at every source state where $f$ is enabled.
If $C_f$, the implementation of $f$, satisfies $I_f$, then specification consequence and $\mathsf{Sub}(I_f,E_{\sigma,f})$ imply that $C_f$ also satisfies $E_{\sigma,f}$.
The precondition direction ensures that a permitted call is within the verified domain of the interface; the postcondition direction ensures that an actual result cannot escape the declared outcome abstraction.
It does not prove the converse: an extra declared branch may never occur at runtime.

\subsection{Input-Aware Runs and Protocol Soundness}\label{sec:cp-soundness}

\begin{definition}[Input-Aware Run]\label{def:protocol-run}
A run of $\mathcal P$ is a finite or infinite sequence
\[
  r=\sigma_0\xrightarrow{f_1(w_1)}\sigma_1
       \xrightarrow{f_2(w_2)}\sigma_2\cdots.
\]
It is \emph{valid}, written $\mathsf{ValidRun}_{\mathcal P}(r)$, when every displayed transition satisfies \cref{eq:v3-transition}; we also call a valid run \emph{accepted} by $\mathcal P$.
Every prefix of a valid run is again valid, so the accepted language is prefix-closed and requires no final-state set.
The logical input $w_k$ is therefore the input of that particular call, while the implementation-selected outcome remains existential.
The mechanized representation uses finite lists; an infinite run is covered safety-wise when all of its finite prefixes are covered.
\qedT
\end{definition}

A valid run records one permitted successor at every position; it does not predict that successor before execution.
For the same $(f_k,\sigma_{k-1},w_k)$, another actual outcome may produce $\sigma'_k\neq\sigma_k$.
The execution then realizes another valid run beginning with that successor, rather than the preselected path above.
This is the intended nondeterminism of the transition judgment, not a protocol failure.

To connect an abstract run to a real call sequence, two caller-side facts remain external to the transition judgment.
First, the actual C arguments must be those represented by $w_k$.
Second, at each call boundary the caller must supply the resources described by $\mathsf{BorrowIn}_{f_k}(w_k)$, separably from the persistent resource represented by $\sigma_{k-1}$ and from its own frame.
The persistent resource is assumed independently only at the starting boundary: each edge postcondition returns the interpretation of the actual successor, which is carried to the next call provided that intervening client code does not invalidate it.
We write $\mathsf{Realizes}_{\mathcal P}(\pi,r)$ when a concrete client execution $\pi$ obeys these argument and ownership conditions and continues from a successor supplied by a branch of each call's edge postcondition, consistent with that call's actual result and resources.
If the caller cannot distinguish the possible successors, its next call must be justified for every branch that remains possible, rather than for one preferred successor.

Let $\mathsf{Safe}_L(\pi)$ mean that every call into the library $L$ during $\pi$ is made in a state satisfying the precondition of its interface; consequently, the verified interface guarantee applies to that call.
The safety predicate concerns the library calls and the properties stated by their interfaces, not arbitrary client code outside those calls.

The result relies on two principal machine-checkable hypotheses.
\emph{(H1) Implementation satisfaction}: every function implementation $C_f$ satisfies $I_f$.
\emph{(H2) Per-edge subsumption}: $\mathsf{Sub}(I_f,E_{\sigma,f})$ holds for every enabled $(\sigma,f)$.
Implementation satisfaction is the function-level evidence described in \Cref{sec:framework}; per-edge subsumption is the protocol-specific obligation.

\begin{theorem}[Protocol Soundness]\label{thm:protocol-soundness}
Under H1 and H2, every concrete execution that realizes a valid input-aware run is safe:
\begin{equation}\label{eq:protocol-soundness}
  \mathsf{ValidRun}_{\mathcal P}(r)
  \ \land\
  \mathsf{Realizes}_{\mathcal P}(\pi,r)
  \quad\Longrightarrow\quad
  \mathsf{Safe}_L(\pi).
\end{equation}
Moreover, whenever such a call returns, the edge postcondition supplies a declared successor state whose interpretation is available for the next call in the realized run.
\end{theorem}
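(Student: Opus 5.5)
The proof goes by induction on the number $k$ of library calls in the realized execution $\pi$, or equivalently on the length of the finite prefix of $r$ realized so far. By \Cref{def:protocol-run}, an infinite run is covered once all of its finite prefixes are, so it suffices to treat finite runs. The inductive invariant is:

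\emph{At the boundary just before the $(k{+}1)$-st call, the caller's state satisfies $\llbracket\sigma_k\rrbracket * F_{\mathrm{client}}$ for some client frame $F_{\mathrm{client}}$, and every call $f_j(w_j)$ with $j\le k$ was made in a state satisfying $P_{f_j}(w_j)$.}

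In the base case the first conjunct is exactly the starting assumption on the persistent resource that $\mathsf{Realizes}_{\mathcal P}$ supplies at the initial boundary, and the second conjunct is vacuous.

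For the inductive step, consider call $k{+}1$. Validity of $r$ gives $\sigma_k\xrightarrow{f_{k+1}(w_{k+1})}\sigma_{k+1}$, and in particular the guard $G_{f_{k+1}}(\sigma_k,w_{k+1})$ from \cref{eq:v3-transition}. $\mathsf{Realizes}_{\mathcal P}(\pi,r)$ supplies two further facts: the actual arguments equal $\mathsf{args}_{f_{k+1}}(w_{k+1})$, and $\mathsf{BorrowIn}_{f_{k+1}}(w_{k+1})$ is held separately from $\llbracket\sigma_k\rrbracket$ and from the client frame. Together these give $P^E_{\sigma_k,f_{k+1}}(w_{k+1}) * F_{\mathrm{client}}$ at the call site.

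The pair $(\sigma_k,f_{k+1})$ is enabled because the guard holds, so H2 applies. The first entailment of \cref{eq:subsumption-obligations} then yields $F * P_{f_{k+1}}(w_{k+1}) * F_{\mathrm{client}}$, so the call is made in a state satisfying its interface precondition, which is the safety obligation at this call.

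By H1 and the frame rule, after return the state satisfies $F * (\exists o.\,Q_{f_{k+1}}(w_{k+1},o)) * F_{\mathrm{client}}$. The second entailment of \cref{eq:subsumption-obligations} turns this into $Q^E_{\sigma_k,f_{k+1}}(w_{k+1}) * F_{\mathrm{client}}$, i.e.\ some $o$ and some branch $br\in\mathsf{Br}_{f_{k+1}}(\sigma_k,w_{k+1},o)$ with $\llbracket br.\mathsf{next}\rrbracket * br.\mathsf{res} * F_{\mathrm{client}}$. This establishes the ``moreover'' clause of \Cref{thm:protocol-soundness}.

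The main obstacle is connecting this existentially chosen $br.\mathsf{next}$ to the preselected $\sigma_{k+1}$ of $r$, since the paper stresses that the actual successor may differ. I would resolve it by using the clause of $\mathsf{Realizes}_{\mathcal P}$ that the execution ``continues from a successor supplied by a branch of each call's edge postcondition, consistent with that call's actual result''. Read this way, realization identifies the run's $\sigma_{k+1}$ with the branch actually taken, so $br.\mathsf{next}=\sigma_{k+1}$.

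Alternatively, one can strengthen the induction to quantify over all valid runs sharing the realized prefix. If the actual successor differs from $\sigma_{k+1}$, the execution realizes a different valid run, as the discussion after \Cref{def:protocol-run} says, and the hypothesis is applied to that run instead. Where the caller cannot tell which successor occurred, realization obliges its next call to be justified for every branch that remains possible, so the hypothesis is applied under each branch.

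Finally, intervening client code between calls is absorbed into the frame: by assumption it does not invalidate $\llbracket\sigma_{k+1}\rrbracket$, and it may consume or modify $br.\mathsf{res}$ and $F_{\mathrm{client}}$ to produce the next client frame. The invariant is thus restored at the next boundary, and the induction yields $\mathsf{Safe}_L(\pi)$.
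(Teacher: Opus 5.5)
Your proposal is correct and follows essentially the same route as the paper's proof sketch. Both run a per-call induction over the finite prefixes of $r$: validity supplies the guard; $\mathsf{Realizes}_{\mathcal P}$ supplies the arguments, the carried state interpretation and $\mathsf{BorrowIn}$; the two subsumption entailments are wrapped around H1 by framing; and realization itself picks the successor branch. Your first resolution of $br.\mathsf{next}$ versus $\sigma_{k+1}$ is exactly the paper's, so the alternative strengthening over all runs sharing a prefix is not needed.
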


\begin{proof}[Proof sketch]
Proceed over the call positions of $r$.
Validity supplies $G_{f_k}(\sigma_{k-1},w_k)$.
The realization relation supplies the matching C arguments, the carried $\llbracket\sigma_{k-1}\rrbracket$, and the separable borrowed resources $\mathsf{BorrowIn}_{f_k}(w_k)$; together these establish the edge precondition.
The first subsumption obligation establishes the interface precondition, and H1 gives the verified call guarantee.
If the call returns, H1 and the second subsumption obligation establish the edge postcondition, yielding a branch, its actual observations, and $\llbracket br.\mathsf{next}\rrbracket$.
For a realized run, such a branch supplies the next state interpretation; the ownership conditions preserve it until the next call.
The argument applies to every finite run and hence to every finite prefix of an infinite run.
\end{proof}

The theorem deliberately separates machine evidence from applicability to a client.
H1 and H2 are formal proof obligations; whether an unverified client actually passes the recorded $w$, supplies the required borrowed resources, and respects the carried ownership is an exported caller obligation.
CCV does not claim to verify that arbitrary client.
Nor does the theorem state that every abstract valid run has a corresponding C client: it states that every client execution that realizes such a run is within the verified interface domain.

\subsection{Model Quality, Review, and Limits}\label{sec:cp-quality}

Protocol soundness is necessary but not sufficient for a useful assurance artifact.
An unsatisfiable precondition can make a subsumption obligation vacuously easy, an overly strong guard can silently reject intended clients, and an extra branch can describe an outcome that the implementation never produces.
CCV therefore subjects the protocol to three additional forms of scrutiny.

\paragraph{Model fidelity.}
The persistent state must account for the relevant resource situations at interface boundaries, and every recorded state field must constrain its separation-logic interpretation.
The logical input $w$ must correspond to caller-visible arguments; each branch must consistently describe its result, successor, and resources; resources retained across calls belong in the state, while resources returned to the caller appear in $br.\mathsf{res}$.
These checks constrain the model, but whether its predicates, guards, and branches express the intended requirements remains a human conformance judgment.

\paragraph{Logical realizability.}
CCV requires a covering witness: one input-aware valid run that starts from a grounded initial state (whose borrowed and library-managed components are empty), exercises every interface function, and whose edge pre- and postconditions are logically satisfiable along that same run.
The witness prevents an entirely empty protocol from passing only through vacuous obligations and provides a concrete artifact for review.
It does not prove that every declared state is satisfiable, that every branch occurs in the implementation, or that a concrete client program realizing the witness exists.

\paragraph{Coverage and limits.}
Together with implementation satisfaction, per-edge subsumption prevents the protocol from omitting an actual interface outcome from an enabled edge, but it does not establish that the accepted language is maximal or complete with respect to the requirements.
The review must therefore look for intended client patterns excluded by guards, missing resource situations, and overly coarse states; any known exclusions are disclosed as limitations rather than justified as safety requirements.
The present model also does not provide a client-side checker, prove termination or branch liveness, or cover callback and concurrent ownership protocols.

These review and disclosure obligations are part of the assurance case rather than weaknesses hidden outside the theorem.
\Cref{sec:framework} explains how CCV derives the guards and branches from requirements and interfaces, generates the edge obligations, uses failed proofs to revise the specification--protocol package, and records the final human confirmation.

\section{CCV Framework: An Evidence-Producing Workflow with Review Gates}\label{sec:framework}

CCV is organized as an evidence-producing workflow rather than as a one-shot proof search.
Agent-working states construct a frozen Clight program, analyze its contractual demands, formalize shared abstractions, and discharge proof obligations.
Two explicit human gates review the parts of the assurance case that are not settled by kernel checking: whether the chosen scope and assumptions conform to the intended guarantees, and whether the resulting formal artifacts conform to the intended design.
The human decision is on the critical path.
A successful run therefore yields both machine-checkable proof evidence and an auditable record of the human judgments on which its interpretation depends.
\Cref{fig:workflow} shows the paper-level workflow: yellow states are agent-working states, blue states are human-involved review gates, and the two exit states report the outcome; solid arrows advance the assurance artifacts, and dashed arrows request rework.
The workflow is organized around evidence artifacts, not around a single undifferentiated proof task: each agent-working state produces a layer of \Cref{fig:assurance-case} or the machine-checked relation connecting two layers, and each gate consumes them.
All four agent-working states apply both to open libraries and to libraries with a designated \texttt{main}; the cases differ only in what \textsc{V\_Specify} produces---an interface-protocol configuration for the former and a main-level specification for the latter.

The workflow implements two complementary phases.
In Phase~I, \textsc{V\_Analyze} derives natural-language contracts through top-down analysis of requirements and caller demands, and \textsc{V\_Specify} formalizes them bottom-up and constructs a candidate interface protocol for an open library.
In Phase~II, \textsc{V\_Prove} constructs the function and protocol proofs, with feedback that can revise the candidate specifications and protocol rather than treating them as fixed.
Preprocessing and human review support both phases by establishing the input and assessing the resulting assurance artifacts.

Three properties of the control flow are worth making explicit.
First, the direct edge from \textsc{V\_Analyze} to \textsc{V\_Specify} makes the early human review an \emph{optional early review gate}: when the scope and risks are already clear, the agent proceeds to formalization directly.
Second, \textsc{V\_Review} may send the run back to \textsc{V\_Preprocess}: early review can question the frozen Clight program itself, not merely request proof-script patches, and after such a rework the input binding, the analysis, and the specifications are all re-established.
Third, \textsc{V\_Confirm} observes the proof and protocol artifacts that were actually produced; if they do not faithfully express the reviewer's decisions, rework enters as a trackable revision request to \textsc{V\_Prove}, not as a disclaimer appended to a report.
\textsc{V\_Review} is a risk-sensitive early gate, whereas a successful run always passes through \textsc{V\_Confirm}, which carries the conformance judgment of the specifications and the protocol to the design intent.
The exits are scope-honest: \textsc{V\_Failed} reports why the claimed assurance could not be responsibly delivered, and \textsc{V\_Succeed} delivers the assurance package---proof evidence, scope and assumption disclosure, and the review record---without extending it to universal program correctness.\footnote{The implementation contains additional operational states for interruption handling, failure triage, and artifact delivery. We abstract them away here because they do not add a new assurance relation; their responsibilities are reflected in the review records and terminal outcomes.}

\begin{figure}[t]
\centering
\includegraphics[width=\columnwidth]{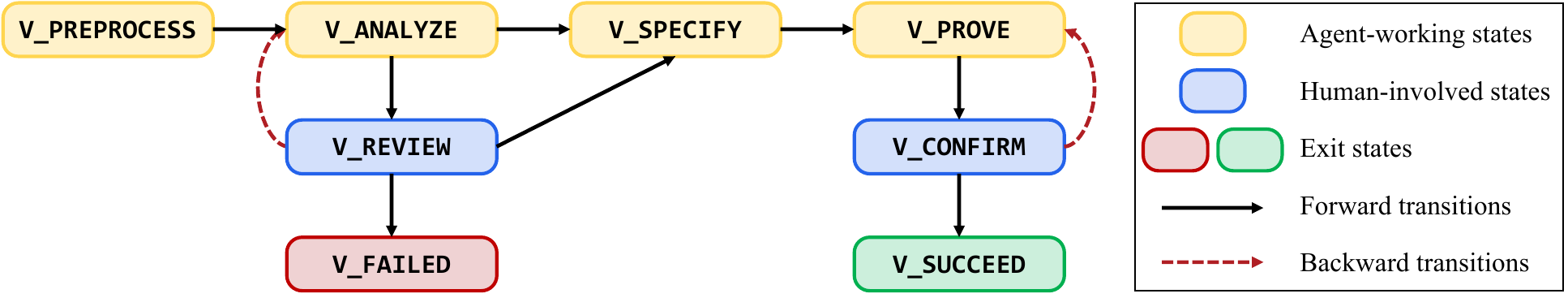}
\caption{CCV's paper-level verification workflow.
Yellow states are agent-working states and blue states are human-involved review gates.
Solid arrows advance the assurance artifacts; dashed arrows request rework.
\textsc{V\_Review} is an optional early review gate, whereas a successful run passes through \textsc{V\_Confirm}.
The figure deliberately abstracts from operational failure routing and certificate delivery.}
\label{fig:workflow}
\end{figure}

\subsection{\textsc{V\_Preprocess}: Establishing the Frozen Clight Program}\label{sec:framework-preprocess}

\textsc{V\_Preprocess} turns the configured C source and its build context into a reproducible, checkable frozen Clight program, together with the input provenance and a transformation and scope record.
Three kinds of input processing must be distinguished, because they differ in what may be claimed.
For trusted preprocessing and translation, CCV establishes the frozen Clight program by the regeneration check of \Cref{sec:assurance-case}.
For a semantics-preserving source transformation, the preservation claim and the source of its evidence are disclosed separately.
For an adaptation that changes the program under verification or replaces environment behavior with new assumptions, no preservation of the original C program's semantics is claimed: the adaptation changes the assurance scope, is recorded, and becomes an object of \textsc{V\_Review} and \textsc{V\_Confirm}.
This distinction protects the core claim of \Cref{sec:assurance-case}: the reader must know exactly which frozen Clight program was proved, not merely that some ``source code'' was involved.

\subsection{\textsc{V\_Analyze}: From Code and Design Material to Contractual Demands}\label{sec:framework-analyze}

\textsc{V\_Analyze} extracts, from callers, implementations, and design material, the interface functions that define the library boundary, the contractual demands on each function, functional and resource invariants, candidate axiomatic specifications of external functions, and---for an open library---candidate interface-usage constraints.
The output is an analysis record with candidate assumptions and the requirements placed on contracts and on the protocol; natural-language analysis by itself is not a soundness theorem, and the state claims nothing beyond a documented basis for formalization.
The key discipline is that CCV does not treat specification synthesis as an oracle: it records the assumptions and the analysis-to-specification choices that later reviews must assess.

\subsection{\textsc{V\_Specify}: Formalizing Specifications and the Protocol Configuration}\label{sec:framework-specify}

\textsc{V\_Specify} commits the analysis to formal objects: resource predicates, lemma statements, and the library's funspecs, and, for an open library, the interface protocol's states, state interpretation, function-labeled transition relation, and source-specific edge specifications.
Formalization proceeds from callees to callers, so that caller contracts can refer to existing callee specifications; the resulting definitions and obligations are candidates whose proofs remain to be constructed.
For an open library, this configuration also exposes the transition-rule premises and edge preconditions that determine which runs are accepted; their non-vacuity and conformance remain explicit review obligations.
Because the analysis is natural-language and the output is formal, the state produces an explicit NL-to-formal change record, so that later review can compare what was approved during analysis with what was actually formalized.
The conformance of the resulting specifications to the design intent is not guaranteed by any type system or by the fact that they compile; it is a human judgment deferred to the review gates.
The formal semantics of the protocol configuration is given in \Cref{sec:central-protocol}.

\subsection{\textsc{V\_Prove}: Discharging and Auditing Proof Obligations}\label{sec:framework-prove}

\textsc{V\_Prove} constructs and checks the body-satisfaction proofs and the shared lemmas, and, for an open library, discharges the per-edge subsumption checks and instantiates the generic run-lifting proof underlying protocol soundness.
When proof attempts expose a semantic gap, the agent revisits the affected specifications, representations, or protocol and rechecks their dependent proofs; otherwise it continues proof construction under the existing contracts.
This feedback supports specification--proof co-evolution within the workflow.
Proof construction is accompanied by an audit of dependencies and coverage, so that the absence of admitted holes and the axioms actually used are part of the checked evidence rather than a matter of trust in the agent's report.
Proof success does not imply unconditional correctness beyond the disclosed axiomatic specifications and assumptions.
The two kinds of checking are complementary: the Rocq kernel answers whether the stated formal obligations are proved, and human confirmation answers whether those statements say the right thing about the intended design.

\subsection{Trust and Assurance Boundaries}\label{sec:framework-trust}

The \emph{trusted computing base} (TCB) of the formal artifacts comprises the Rocq kernel; the VST program logic and its machine-checked soundness connection to the CompCert Clight semantics; the configured but unverified preprocessing and front-end path from C to Clight, pinned to the frozen input by the regeneration check; and the correctness-critical gate code that binds the frozen source and toolchain configuration to the delivered Clight and proof artifacts.
These gates also audit proof coverage, the absence of proof holes, and all axioms and premises used by the delivered claims.
The case separately discloses the audited logical and model axioms, together with the allocator and external-function premises on which its proofs depend.
The LLM agent is not in the TCB: Rocq rechecks every formal statement and proof that it proposes, while the correctness-critical gates validate the source/artifact bindings and audits.
Human conformance review is a further explicit assurance boundary, not a kernel-checked theorem.
The resulting formal guarantee is relative to the CompCert Clight semantics and these disclosed premises; extending it to an executable requires a downstream verified-compilation step.


\section{Evaluation}\label{sec:evaluation}\label{sec:evidence}

We report verification results for six C programs, covering pointer-based data structures and cryptographic components of openHiTLS.
The cases comprise 299 function definitions across 18 translation units (TUs).
The verification backend uses Rocq~9.0.0, VST~2.16, and CompCert~3.17.
The results are supported by archived source, specifications, proofs, certificates, and verification reports.

\paragraph{Benchmarks and results.}
\Cref{tab:eval-benchmarks} summarizes the verified scope and available run costs.
The red--black tree exercises allocation, parent pointers, rotations, and insertion repair, with a designated \texttt{main}.
The linked-list entry uses \texttt{roster}, a renamed, single-TU variant of the linked-list library that exercises traversal, deletion, and ownership transfer.
The four openHiTLS cases implement ML-DSA and SLH-DSA signatures, SHA-256/224 hashing, and FrodoKEM key encapsulation, together with their context-management operations.
C LOC counts physical lines of the original implementation files, excluding headers, preprocessor expansion, and generated Clight syntax.
The cryptographic cases cover the listed modules and their configured C implementations; functions outside this scope are represented by disclosed external specifications.

\begin{table}[htbp]
  \centering
  \small
  \begin{tabular}{lrrrrrr}
    \toprule
    Benchmark & C LOC & TUs & Functions & Time (d) & Tokens (B) & Human (d) \\
    \midrule
    Red--black tree & 287 & 1 & 12 & 1.55 & --- & <1\\
    Linked list  & 148 & 1 & 10 & $\sim$0.22 & 0.22 & <1\\
    ML-DSA & $\sim$2,439 & 3 & 105 & $\sim$7.7 & $\sim$25.9 & <1\\
    SLH-DSA & 1,981 & 5 & 97 & ~9 & --- & <1\\
    SHA-256/224 & 667 & 2 & 25 & 3.82 & 5.77 & <1\\
    FrodoKEM & 1,747 & 6 & 50 & 17.03 & 13.26 & <1\\
    \bottomrule
  \end{tabular}
  \caption{Verification scope and recorded cost. All listed functions, including internal helpers, are verified for memory safety and leak freedom under disclosed contracts and assumptions. B denotes $10^9$ tokens. Dashes mark missing time/token records; Human (d) reports human effort in person-days.}
  \label{tab:eval-benchmarks}
\end{table}

\ours requires delivered proofs to be free of admitted obligations; all six cases pass this gate.
All 299 function bodies have machine-checked proofs that they satisfy their specifications.
The memory-safety guarantees hold under the function preconditions and the stated external specifications, with respect to CompCert Clight semantics.
Leak freedom is supported by accounting for heap resources through deallocation or explicit transfer to the caller.
For the linked-list, SLH-DSA, SHA-256/224, and FrodoKEM libraries, the delivered evidence additionally includes an interface protocol, per-edge subsumption proofs, and a covering witness.
These cases demonstrate function verification and protocol construction across multiple modules; the largest case contains 105 function definitions.
The cryptographic algorithms' functional correctness, termination, and concurrency are outside the scope of these results.

\paragraph{Verification cost.}
Elapsed time covers the recorded run through delivery, including interruptions and service delays.
For example, FrodoKEM took 17.03 days in total, of which 193.1 hours were confirmed model-service downtime; subtracting those intervals gives 8.99 days.
The token column reports total recorded usage by the main agent and its subagents, including input, output, and cache reads, with reasoning included according to the host's accounting convention.
The SHA-256/224 run used GLM-5.3 and GLM-5.3-Flash through OpenCode, whereas FrodoKEM used Claude Opus~5 through Claude Code.
These measurements describe the recorded executions; different models, service availability, and accounting conventions preclude a controlled efficiency comparison.
SLH-DSA's token count and the red--black tree token count remain unavailable in the archived records.
The reported human effort is less than one person-day for each benchmark.\wzycomment{Document the scope and measurement basis of the reported person-days, including source preparation, specification review, proof assistance, and final conformance review. Complete the missing SLH-DSA and red--black tree token entries if records become available; include hardware details where available.}

\paragraph{Conformance review in practice.}
Review of the SHA-256/224 case identified a self-copy input excluded by the generated contract: \texttt{CopyCtx(ctx, ctx)} reaches a \texttt{memcpy} whose source and destination overlap.
The proof remains valid because its precondition requires disjoint source and destination resources, but reviewing that exclusion exposed a library defect for investigation and disclosure.
This illustrates the role of requirements-facing review in the assurance case: a provable contract can still exclude a use that merits scrutiny.
The present cases provide evidence of completed verification on these selected programs, rather than an estimate of success rate over arbitrary C codebases.

\section{Related Work}\label{sec:related-work}

We relate CCV to two bodies of work. \cref{sec:related-resource-protocols} positions the central protocol artifact among resource-aware interface models and behavioral-interface synthesis; \cref{sec:related-automation} compares how verification artifacts are constructed, and what must be trusted, in each approach.

\subsection{Resource-Aware API Models and Behavioral Interfaces}
\label{sec:related-resource-protocols}

Separation-logic and ownership-aware verification already provide the main semantic ingredients of resource-sensitive API reasoning. VST function specifications describe Hoare-style pre- and postconditions over C memory~\cite{VST-esop11}, and systems such as RefinedC interpret refined ownership types in a foundational separation logic~\cite{RefinedC-pldi21}. These systems can express interfaces that consume a resource indexed by one abstract configuration and return a resource indexed by another. CCV does not claim this resource-transformer view as a novelty; it builds on it. Its specific role is to organize verified exported API contracts into one reviewable central protocol for a library collection, and to connect that protocol to provider-body proofs and accepted-run closure within a single assurance workflow.

At the behavioral level, typestate systems and behavioral-interface methods have long represented legal method or API-call sequences~\cite{typestate-tse86,Vault-pldi01}. More directly, behavioral interface synthesis derives automata of legal library calls from program models that are safe and, under stated assumptions, permissive or full~\cite{permissive-interface-fse05,Java-interface-popl05,interface-alg-cav07}. Specification mining and temporal-property inference offer additional ways to propose API protocols from traces, queries, or temporal templates~\cite{specification-synthesis,LTL-syn-ase15}. CCV therefore claims neither the first API automaton, nor the first source-derived legal-call language, nor the first use of refinement feedback to construct a protocol.

Instead, CCV targets a different assurance boundary. Its central protocol is deliberately design-shaped rather than an inferred maximal implementation-safe language; its states are interpreted by VST separation-logic resources; enabled edges are justified by refinement to exported contracts; and the protocol is exposed for human design-conformance review. The contribution here is the integration of a reviewed resource protocol layer into the assurance chain for verified C libraries, not a new general theory of typestate or behavioral-interface synthesis.

\subsection{Automating Deductive Verification: ITP, Auto-active, and LLM-Based Approaches}
\label{sec:related-automation}

We compare along four axes: what is given as input, what is produced, where a specification's content comes from, and what must ultimately be trusted.

\paragraph{Proving given specifications.}
Most automation for deductive verification takes the specification as given and discharges proof obligations one goal, lemma, or function at a time; the standard benchmarks are built in exactly this way. In interactive provers, hammers translate a stated goal to external automated provers~\cite{sledgehammer-ijcar10,coqhammer-jar18}, and learning- or LLM-based systems generate, repair, or decompose proofs of stated theorems~\cite{thor-neurips22,baldur-fse23,proverbot9001,coqpilot-ase24,palm-ase24,cobblestone-icse26}. Auto-active verifiers such as Dafny, Frama-C/WP, and Verus discharge verification conditions with SMT solvers~\cite{dafny,frama-c,verus-sosp24}; LLM assistance in this setting supplies helper assertions for failing verifications~\cite{laurel-oopsla25}, proof hints~\cite{dafnybench}, or whole correctness proofs completing code against its given specification~\cite{autoverus-oopsla25}. RefinedC occupies a middle ground, automating the foundational proof derivation once the user has annotated the C code with refinement types~\cite{RefinedC-pldi21}. Reported evaluations across these lines remain at benchmark scale, and success rates drop measurably on real code: AutoRocq, the closest recent system to our setting, proves 12 of 60 lemmas in a Linux kernel case study~\cite{autorocq}, and one cross-language vericoding benchmark reports success rates of 27\% for Lean against 82\% for Dafny~\cite{vericoding}.

\paragraph{Generating specifications.}
A second body of work generates the contracts or invariants themselves, or candidates for them. Classical techniques infer annotations within a fixed candidate space~\cite{houdini-fme01}, detect likely invariants from execution traces~\cite{daikon-scico07}, or use bi-abduction to infer separation-logic footprints whose reference point is the absence of memory errors rather than functional design intent~\cite{biabduction-popl09}. LLM-based generators mostly target auto-active verifiers: goal-directed ACSL annotations validated by a weakest-precondition calculus~\cite{cav24-autospec,preguss}, JML specifications~\cite{icse25-specgen,veriact}, Verus specifications and proofs synthesized by self-evolution~\cite{safe-verus}, LLM-drafted Dafny contracts frozen before proving~\cite{atlas-dafny}, and code--docstring--annotation triples checked for mutual consistency~\cite{clover-saiv24}. On the ITP side, CLEVER benchmarks the joint synthesis of specifications, implementations, and proofs in Lean~\cite{clever-lean}; AutoRocq combines Frama-C-generated ACSL contracts with LLM-proposed loop invariants filtered by value analysis, and discharges the resulting proof obligations in Rocq~\cite{autorocq}; and LeetProof validates an LLM-drafted Lean specification by randomized property-based testing before any implementation is synthesized, an anchor that by itself exposed defects in roughly 10\% of the reference specifications of two Lean vericoding benchmarks~\cite{leetproof}. Such validation remains empirical, though: over-constrained specifications that the sampled tests happen to satisfy escape detection. More fundamentally, none of these methods by itself connects a generated specification back to the intended design: the acceptance criterion is the code's own behavior, the observed traces, or a stated target property, and verifier acceptance alone does not imply that a specification is correct or complete~\cite{veriact}. Separation-logic specifications are the hardest to elicit: in one empirical study of LLM-generated VeriFast annotations, 94\% of all errors traced to separation-logic-verifier-specific knowledge, with ownership and heap-chunk failures the largest single share~\cite{fan-verifast-llm}.

\paragraph{Where specification content comes from.}
These generators also differ in direction. Most work locally or bottom-up: a function's candidate contract is shaped by its own body, documentation, or tests, and where a call graph is used it schedules generation and validation rather than shaping content~\cite{cav24-autospec,preguss}. The complementary, caller-driven direction derives a callee's specification from its uses. Maximal specification synthesis infers the weakest callee assumptions that make a given client correct, recursively down the call graph~\cite{popl16-maximal-spec}; angelic verification infers acceptable specifications of unknown environment functions from their usage context, in the service of bug finding~\cite{cav15-angelic}; and two recent LLM-based systems scale the idea to whole codebases, deriving specifications from caller expectations through natural-language specifications and LLM-mediated Hoare-style reasoning~\cite{fm-agent}, or through top-down inference in a restricted DSL discharged by bounded model checking~\cite{bmc-agent}. In these caller-driven lines the acceptance anchor is either mechanical or LLM-judged, and the evidence stops at bug findings or bounded verdicts. CCV's analysis phase is inspired by the caller-expectation view of FM-Agent in particular~\cite{fm-agent}; it differs in that the propagated demands are recorded as natural-language contracts that a human gate reviews and freezes, and the resulting specifications are delivered as kernel-checked separation-logic proofs.

\paragraph{Trust boundaries.}
What must ultimately be trusted also differs. Auto-active pipelines place the verifier and its toolchain in the trusted computing base, as they provide no independently checkable proofs~\cite{RefinedC-pldi21}; and for any deductive verifier the specification itself remains a trusted input, because an implementation proof ``does not guarantee that the specification describes the behaviour the user expects''~\cite{sel4-sosp09}. Foundational multi-modal verifiers remove even the verifier from this base: in Velvet, the soundness of verification-condition generation is itself a machine-checked Lean theorem, and a property that survives randomized testing is admitted with \texttt{sorry} rather than entering the certificate~\cite{leetproof}. LLM-based generation changes who writes the artifacts, not who answers for them: in a recent large Rocq development the agent silently weakened a theorem statement during proof construction, and the authors conclude that statement review cannot be eliminated~\cite{certicoq-anf}. CCV's contribution is organizational rather than a new prover: an agent proposes contracts, lemmas, protocol definitions, and proofs; the Rocq kernel and mechanical gates re-check all formal evidence, so the agent stays outside the trusted computing base; and designated human gates review the conformance of the formal artifacts to the selected requirements, recorded as an explicit part of the machine-checked assurance case. Each of the lines above automates one stage of the path from code to reviewed evidence; CCV organizes these stages into a single codebase-level workflow.

\section{Conclusion}\label{sec:conclusion}

We have presented \ours, an LLM-assisted framework for building auditable assurance cases that connect machine-checked program proofs to the intended requirements.
Its two-phase method co-evolves function specifications and proofs, while an interface protocol models intended cross-interface use and guides the construction of composable contracts.
Using VST in Rocq, \ours establishes memory safety and leak freedom under disclosed contracts and assumptions for all 299 function definitions across six C benchmarks, including industrial cryptographic components.
Each benchmark requires less than one person-day of reported human effort.
The resulting cases combine automated evidence construction with explicit human conformance review.
\bibliography{references}

@String{Computing = "Computing" }

@String{Computer = "{IEEE} Computer" }

@String{Springer = "Springer-Verlag" }

@online{doclicense,
  author =    {Robin Schneider},
  title =  {The \textsl{doclicense} package},
  year = 2022,
  url =    {http://www.ctan.org/pkg/doclicense},
  lastaccessed = {May 27, 2022}
  }

@ArtifactSoftware{R,
    title = {R: A Language and Environment for Statistical Computing},
    author = {{R Core Team}},
    organization = {R Foundation for Statistical Computing},
    address = {Vienna, Austria},
    year = {2019},
    url = {https://www.R-project.org/},
}

@article{frama-c,
  author       = {Florent Kirchner and
                  Nikolai Kosmatov and
                  Virgile Prevosto and
                  Julien Signoles and
                  Boris Yakobowski},
  title        = {Frama-{C}: {A} software analysis perspective},
  journal      = {Formal Aspects Comput.},
  volume       = {27},
  number       = {3},
  pages        = {573--609},
  year         = {2015},
  doi          = {10.1007/s00165-014-0326-7}
}

@inproceedings{dafny,
  author       = {K. Rustan M. Leino},
  title        = {Dafny: An Automatic Program Verifier for Functional Correctness},
  booktitle    = {{LPAR} (Dakar)},
  series       = {Lecture Notes in Computer Science},
  volume       = {6355},
  pages        = {348--370},
  publisher    = {Springer},
  year         = {2010},
  doi = {10.1007/978-3-642-17511-4_20}
}

@online{frama-c/wp,
  author       = {Allan Blanchard},
  year  =        2020,
  title =        "Introduction to {C} program proof with {F}rama-{C} and its {WP} plugin",
  url =          {https://allan-blanchard.fr/publis/frama-c-wp-tutorial-en.pdf}
}

@inproceedings{cav24-autospec,
  author       = {Cheng Wen and
                  Jialun Cao and
                  Jie Su and
                  Zhiwu Xu and
                  Shengchao Qin and
                  Mengda He and
                  Haokun Li and
                  Shing{-}Chi Cheung and
                  Cong Tian},
  title        = {Enchanting Program Specification Synthesis by Large Language Models
                  Using Static Analysis and Program Verification},
  booktitle    = {{CAV} {(2)}},
  series       = {Lecture Notes in Computer Science},
  volume       = {14682},
  pages        = {302--328},
  publisher    = {Springer},
  year         = {2024},
  doi = {10.1007/978-3-031-65630-9_16}
}

@inproceedings{icse25-specgen,
  author       = {Lezhi Ma and
                  Shangqing Liu and
                  Yi Li and
                  Xiaofei Xie and
                  Lei Bu},
  title        = {SpecGen: Automated Generation of Formal Program Specifications via
                  Large Language Models},
  booktitle    = {{ICSE}},
  pages        = {16--28},
  publisher    = {{IEEE}},
  year         = {2025},
  doi = {10.1109/ICSE55347.2025.00129}
}

@article{hoare-logic,
author = {Hoare, C. A. R.},
title = {An axiomatic basis for computer programming},
year = {1969},
issue_date = {Oct. 1969},
publisher = {Association for Computing Machinery},
address = {New York, NY, USA},
volume = {12},
number = {10},
issn = {0001-0782},
url = {https://doi.org/10.1145/363235.363259},
doi = {10.1145/363235.363259},
journal = {Commun. ACM},
month = oct,
pages = {576–580},
numpages = {5}
}

@inproceedings{specification-synthesis,
  author       = {Glenn Ammons and
                  Rastislav Bod{\'{\i}}k and
                  James R. Larus},
  title        = {Mining specifications},
  booktitle    = {{POPL}},
  pages        = {4--16},
  publisher    = {{ACM}},
  year         = {2002},
  doi = {10.1145/503272.503275}
}

@article{preguss,
author = {Wang, Zhongyi and Lin, Tengjie and Chen, Mingshuai and Li, Haokun and Yang, Mingqi and Yi, Xiao and Qin, Shengchao and Luo, Yixing and Li, Xiaofeng and Gu, Bin and Lu, Liqiang and Yin, Jianwei},
title = {A Tale of 1001 LoC: Potential Runtime Error-Guided Specification Synthesis for Verifying Large-Scale Programs},
year = {2026},
issue_date = {April 2026},
publisher = {Association for Computing Machinery},
address = {New York, NY, USA},
volume = {10},
number = {OOPSLA1},
doi = {10.1145/3798268},
url = {https://doi.org/10.1145/3798268},
journal = {Proc. ACM Program. Lang.},
articleno = {160},
numpages = {29}
}

@inproceedings{VST-esop11,
  author       = {Andrew W. Appel},
  editor       = {Gilles Barthe},
  title        = {Verified Software Toolchain - (Invited Talk)},
  booktitle    = {Programming Languages and Systems - 20th European Symposium on Programming,
                  {ESOP} 2011, Held as Part of the Joint European Conferences on Theory
                  and Practice of Software, {ETAPS} 2011, Saarbr{\"{u}}cken, Germany,
                  March 26-April 3, 2011. Proceedings},
  series       = {Lecture Notes in Computer Science},
  volume       = {6602},
  pages        = {1--17},
  publisher    = {Springer},
  year         = {2011},
  url          = {https://doi.org/10.1007/978-3-642-19718-5\_1},
  doi          = {10.1007/978-3-642-19718-5\_1},
  bibsource    = {dblp computer science bibliography, https://dblp.org}
}

@article{sel4-tocs14,
  author       = {Gerwin Klein and
                  June Andronick and
                  Kevin Elphinstone and
                  Toby Murray and
                  Thomas Sewell and
                  Rafal Kolanski and
                  Gernot Heiser},
  title        = {Comprehensive Formal Verification of an {OS} Microkernel},
  journal      = {{ACM} Transactions on Computer Systems},
  volume       = {32},
  number       = {1},
  articleno    = {2},
  numpages     = {70},
  year         = {2014},
  doi          = {10.1145/2560537},
  url          = {https://doi.org/10.1145/2560537}
}

@inproceedings{sel4-sosp09,
  author       = {Gerwin Klein and
                  Kevin Elphinstone and
                  Gernot Heiser and
                  June Andronick and
                  David Cock and
                  Philip Derrin and
                  Dhammika Elkaduwe and
                  Kai Engelhardt and
                  Rafal Kolanski and
                  Michael Norrish and
                  Thomas Sewell and
                  Harvey Tuch and
                  Simon Winwood},
  title        = {{seL4}: Formal Verification of an {OS} Kernel},
  booktitle    = {Proceedings of the 22nd {ACM} Symposium on Operating Systems
                  Principles},
  pages        = {207--220},
  publisher    = {{ACM}},
  year         = {2009},
  doi          = {10.1145/1629575.1629596},
  url          = {https://doi.org/10.1145/1629575.1629596}
}

@inproceedings{RefinedC-pldi21,
  author       = {Michael Sammler and
                  Rodolphe Lepigre and
                  Robbert Krebbers and
                  Kayvan Memarian and
                  Derek Dreyer and
                  Deepak Garg},
  editor       = {Stephen N. Freund and
                  Eran Yahav},
  title        = {RefinedC: automating the foundational verification of {C} code with
                  refined ownership types},
  booktitle    = {{PLDI} '21: 42nd {ACM} {SIGPLAN} International Conference on Programming
                  Language Design and Implementation, Virtual Event, Canada, June 20-25,
                  2021},
  pages        = {158--174},
  publisher    = {{ACM}},
  year         = {2021},
  url          = {https://doi.org/10.1145/3453483.3454036},
  doi          = {10.1145/3453483.3454036},
  bibsource    = {dblp computer science bibliography, https://dblp.org}
}

@article{typestate-tse86,
  author       = {Robert E. Strom and
                  Shaula Yemini},
  title        = {Typestate: {A} Programming Language Concept for Enhancing Software
                  Reliability},
  journal      = {{IEEE} Trans. Software Eng.},
  volume       = {12},
  number       = {1},
  pages        = {157--171},
  year         = {1986},
  url          = {https://doi.org/10.1109/TSE.1986.6312929},
  doi          = {10.1109/TSE.1986.6312929},
  bibsource    = {dblp computer science bibliography, https://dblp.org}
}

@inproceedings{Vault-pldi01,
  author       = {Robert DeLine and
                  Manuel F{\"{a}}hndrich},
  editor       = {Michael Burke and
                  Mary Lou Soffa},
  title        = {Enforcing High-Level Protocols in Low-Level Software},
  booktitle    = {Proceedings of the 2001 {ACM} {SIGPLAN} Conference on Programming
                  Language Design and Implementation (PLDI), Snowbird, Utah, USA, June
                  20-22, 2001},
  pages        = {59--69},
  publisher    = {{ACM}},
  year         = {2001},
  url          = {https://doi.org/10.1145/378795.378811},
  doi          = {10.1145/378795.378811},
  bibsource    = {dblp computer science bibliography, https://dblp.org}
}

@inproceedings{permissive-interface-fse05,
  author       = {Thomas A. Henzinger and
                  Ranjit Jhala and
                  Rupak Majumdar},
  editor       = {Michel Wermelinger and
                  Harald C. Gall},
  title        = {Permissive interfaces},
  booktitle    = {Proceedings of the 10th European Software Engineering Conference held
                  jointly with 13th {ACM} {SIGSOFT} International Symposium on Foundations
                  of Software Engineering, 2005, Lisbon, Portugal, September 5-9, 2005},
  pages        = {31--40},
  publisher    = {{ACM}},
  year         = {2005},
  url          = {https://doi.org/10.1145/1081706.1081713},
  doi          = {10.1145/1081706.1081713},
  bibsource    = {dblp computer science bibliography, https://dblp.org}
}

@inproceedings{Java-interface-popl05,
  author       = {Rajeev Alur and
                  Pavol Cern{\'{y}} and
                  P. Madhusudan and
                  Wonhong Nam},
  editor       = {Jens Palsberg and
                  Mart{\'{\i}}n Abadi},
  title        = {Synthesis of interface specifications for Java classes},
  booktitle    = {Proceedings of the 32nd {ACM} {SIGPLAN-SIGACT} Symposium on Principles
                  of Programming Languages, {POPL} 2005, Long Beach, California, USA,
                  January 12-14, 2005},
  pages        = {98--109},
  publisher    = {{ACM}},
  year         = {2005},
  url          = {https://doi.org/10.1145/1040305.1040314},
  doi          = {10.1145/1040305.1040314},
  bibsource    = {dblp computer science bibliography, https://dblp.org}
}

@inproceedings{interface-alg-cav07,
  author       = {Dirk Beyer and
                  Thomas A. Henzinger and
                  Vasu Singh},
  editor       = {Werner Damm and
                  Holger Hermanns},
  title        = {Algorithms for Interface Synthesis},
  booktitle    = {Computer Aided Verification, 19th International Conference, {CAV}
                  2007, Berlin, Germany, July 3-7, 2007, Proceedings},
  series       = {Lecture Notes in Computer Science},
  volume       = {4590},
  pages        = {4--19},
  publisher    = {Springer},
  year         = {2007},
  url          = {https://doi.org/10.1007/978-3-540-73368-3\_4},
  doi          = {10.1007/978-3-540-73368-3\_4},
  bibsource    = {dblp computer science bibliography, https://dblp.org}
}

@inproceedings{LTL-syn-ase15,
  author       = {Caroline Lemieux and
                  Dennis Park and
                  Ivan Beschastnikh},
  editor       = {Myra B. Cohen and
                  Lars Grunske and
                  Michael Whalen},
  title        = {General {LTL} Specification Mining},
  booktitle    = {30th {IEEE/ACM} International Conference on Automated Software Engineering,
                  {ASE} 2015, Lincoln, NE, USA, November 9-13, 2015},
  pages        = {81--92},
  publisher    = {{IEEE} Computer Society},
  year         = {2015},
  url          = {https://doi.org/10.1109/ASE.2015.71},
  doi          = {10.1109/ASE.2015.71},
  bibsource    = {dblp computer science bibliography, https://dblp.org}
}

@article{compcert-cacm09,
  author       = {Xavier Leroy},
  title        = {Formal verification of a realistic compiler},
  journal      = {Communications of the {ACM}},
  volume       = {52},
  number       = {7},
  pages        = {107--115},
  year         = {2009},
  doi          = {10.1145/1538788.1538814}
}

@inproceedings{VSU-esop21,
  author       = {Lennart Beringer},
  editor       = {Nobuko Yoshida},
  title        = {Verified Software Units},
  booktitle    = {Programming Languages and Systems - 30th European Symposium on Programming,
                  {ESOP} 2021, Held as Part of the European Joint Conferences on Theory
                  and Practice of Software, {ETAPS} 2021, Luxembourg City, Luxembourg,
                  March 27 - April 1, 2021, Proceedings},
  series       = {Lecture Notes in Computer Science},
  volume       = {12648},
  pages        = {118--147},
  publisher    = {Springer},
  year         = {2021},
  doi          = {10.1007/978-3-030-72019-3_5}
}

@misc{iso-15026,
  author       = {{ISO/IEC}},
  title        = {{ISO/IEC} 15026-1:2019 Systems and Software Engineering - Systems and
                  Software Assurance - Part 1: Concepts and Vocabulary},
  year         = {2019},
  publisher    = {{ISO/IEC}}
}

@incollection{assurance-case,
title = {Confidence as a product},
booktitle = {System Assurance},
publisher = {Morgan Kaufmann},
address = {Boston},
pages = {23-47},
year = {2011},
series = {The MK/OMG Press},
isbn = {978-0-12-381414-2},
doi = {https://doi.org/10.1016/B978-0-12-381414-2.00002-6},
url = {https://www.sciencedirect.com/science/article/pii/B9780123814142000026},
author = {Nikolai Mansourov and Djenana Campara}
}

@inproceedings{sledgehammer-ijcar10,
  author       = {Sascha B{\"o}hme and Tobias Nipkow},
  editor       = {J{\"u}rgen Giesl and Reiner H{\"a}hnle},
  title        = {Sledgehammer: Judgement Day},
  booktitle    = {Automated Reasoning, 5th International Joint Conference, {IJCAR}
                  2010, Edinburgh, UK, July 16-19, 2010. Proceedings},
  series       = {Lecture Notes in Computer Science},
  volume       = {6173},
  pages        = {107--121},
  publisher    = {Springer},
  year         = {2010},
  doi          = {10.1007/978-3-642-14203-1_9}
}

@article{coqhammer-jar18,
  author       = {{\L}ukasz Czajka and Cezary Kaliszyk},
  title        = {Hammer for Coq: Automation for Dependent Type Theory},
  journal      = {J. Autom. Reason.},
  volume       = {61},
  number       = {1-4},
  pages        = {423--453},
  year         = {2018},
  doi          = {10.1007/s10817-018-9458-4}
}

@inproceedings{thor-neurips22,
  author       = {Albert Qiaochu Jiang and Wenda Li and Szymon Tworkowski and
                  Konrad Czechowski and Tomasz Odrzyg{\'o}zd{\'z} and Piotr Mi{\l}o{\'s} and
                  Yuhuai Wu and Mateja Jamnik},
  title        = {Thor: Wielding Hammers to Integrate Language Models and Automated
                  Theorem Provers},
  booktitle    = {Advances in Neural Information Processing Systems 35 ({NeurIPS}
                  2022)},
  pages        = {8360--8373},
  year         = {2022},
  eprint       = {2205.10893},
  eprinttype   = {arxiv}
}

@inproceedings{baldur-fse23,
  author       = {Emily First and Markus N. Rabe and Talia Ringer and Yuriy Brun},
  title        = {Baldur: Whole-Proof Generation and Repair with Large Language
                  Models},
  booktitle    = {Proceedings of the 31st ACM Joint European Software Engineering
                  Conference and Symposium on the Foundations of Software Engineering},
  pages        = {1229--1241},
  publisher    = {{ACM}},
  year         = {2023},
  doi          = {10.1145/3611643.3616243}
}

@inproceedings{proverbot9001,
  author       = {Alex Sanchez-Stern and Yousef Alhessi and Lawrence Saul and
                  Sorin Lerner},
  title        = {Generating Correctness Proofs with Neural Networks},
  booktitle    = {Proceedings of the 4th ACM SIGPLAN International Workshop on
                  Machine Learning and Programming Languages ({MAPL} 2020)},
  pages        = {1--10},
  publisher    = {{ACM}},
  year         = {2020},
  doi          = {10.1145/3394450.3397466}
}

@inproceedings{coqpilot-ase24,
  author       = {Andrei Kozyrev and Gleb Solovev and Nikita Khramov and Anton
                  Podkopaev},
  title        = {CoqPilot, a plugin for LLM-based generation of proofs},
  booktitle    = {Proceedings of the 39th IEEE/ACM International Conference on
                  Automated Software Engineering ({ASE} 2024)},
  pages        = {2382--2385},
  publisher    = {{ACM}},
  year         = {2024},
  doi          = {10.1145/3691620.3695357}
}

@inproceedings{palm-ase24,
  author       = {Minghai Lu and Benjamin Delaware and Tianyi Zhang},
  title        = {Proof Automation with Large Language Models},
  booktitle    = {Proceedings of the 39th IEEE/ACM International Conference on
                  Automated Software Engineering ({ASE} 2024)},
  pages        = {1509--1520},
  publisher    = {{ACM}},
  year         = {2024},
  doi          = {10.1145/3691620.3695521}
}

@inproceedings{cobblestone-icse26,
  author       = {Saketh Ram Kasibatla and Arpan Agarwal and Yuriy Brun and Sorin
                  Lerner and Talia Ringer and Emily First},
  title        = {Cobblestone: A Divide-and-Conquer Approach for Automating Formal
                  Verification},
  booktitle    = {Proceedings of the 48th International Conference on Software
                  Engineering ({ICSE} 2026)},
  pages        = {704--716},
  year         = {2026},
  doi          = {10.1145/3744916.3773178},
  eprint       = {2410.19940},
  eprinttype   = {arxiv}
}

@article{autorocq,
  author       = {Haoxin Tu and Huan Zhao and Yahui Song and Mehtab Zafar and Ruijie
                  Meng and Abhik Roychoudhury},
  title        = {Agentic Verification of Software Systems},
  journal      = {Proc. {ACM} Softw. Eng.},
  volume       = {3},
  number       = {FSE},
  pages        = {3558--3581},
  year         = {2026},
  doi          = {10.1145/3808164},
  eprint       = {2511.17330},
  eprinttype   = {arxiv}
}

@article{certicoq-anf,
  author       = {Zoe Paraskevopoulou},
  title        = {Machine-Generated, Machine-Checked Proofs for a Verified Compiler
                  (Experience Report)},
  journal      = {Proc. {ACM} Program. Lang.},
  volume       = {10},
  number       = {ICFP},
  pages        = {807--826},
  year         = {2026},
  doi          = {10.1145/3828700},
  eprint       = {2602.20082},
  eprinttype   = {arxiv}
}

@misc{leetproof,
  author       = {Yueyang Feng and Dipesh Kafle and Vladimir Gladshtein and
                  Vitaly Kurin and George P{\^i}rlea and Qiyuan Zhao and Peter
                  M{\"u}ller and Ilya Sergey},
  title        = {Certified Program Synthesis with a Multi-Modal Verifier},
  year         = {2026},
  eprint       = {2604.16584},
  eprinttype   = {arxiv}
}

@inproceedings{verus-sosp24,
  author       = {Andrea Lattuada and Travis Hance and Jay Bosamiya and Matthias Brun
                  and Chanhee Cho and Hayley LeBlanc and Pranav Srinivasan and Reto
                  Achermann and Tej Chajed and Chris Hawblitzel and Jon Howell and
                  Jacob R. Lorch and Oded Padon and Bryan Parno},
  title        = {Verus: A Practical Foundation for Systems Verification},
  booktitle    = {Proceedings of the 30th Symposium on Operating Systems Principles
                  ({SOSP} 2024)},
  pages        = {438--454},
  publisher    = {{ACM}},
  year         = {2024},
  doi          = {10.1145/3694715.3695952}
}

@article{laurel-oopsla25,
  author       = {Eric Mugnier and Emmanuel Anaya Gonzalez and Nadia Polikarpova and
                  Ranjit Jhala and Yuanyuan Zhou},
  title        = {Laurel: Unblocking Automated Verification with Large Language
                  Models},
  journal      = {Proc. {ACM} Program. Lang.},
  volume       = {9},
  number       = {{OOPSLA}},
  pages        = {1519--1545},
  year         = {2025},
  doi          = {10.1145/3720499}
}

@article{dafnybench,
  author       = {Chloe R. Loughridge and Qinyi Sun and Seth Ahrenbach and Federico
                  Cassano and Chuyue Sun and Ying Sheng and Anish Mudide and
                  Md Rakib Hossain Misu and Nada Amin and Max Tegmark},
  title        = {DafnyBench: A Benchmark for Formal Software Verification},
  journal      = {Transactions on Machine Learning Research},
  issn         = {2835-8856},
  year         = {2025},
  url          = {https://openreview.net/forum?id=yBgTVWccIx}
}

@article{autoverus-oopsla25,
  author       = {Chenyuan Yang and Xuheng Li and Md Rakib Hossain Misu and Jianan Yao
                  and Weidong Cui and Yeyun Gong and Chris Hawblitzel and Shuvendu
                  Lahiri and Jacob R. Lorch and Shuai Lu and Fan Yang and Ziqiao Zhou
                  and Shan Lu},
  title        = {AutoVerus: Automated Proof Generation for Rust Code},
  journal      = {Proc. {ACM} Program. Lang.},
  volume       = {9},
  number       = {{OOPSLA}},
  pages        = {3454--3482},
  year         = {2025},
  doi          = {10.1145/3763174}
}

@misc{vericoding,
  author       = {Sergiu Bursuc and Theodore Ehrenborg and Shaowei Lin and
                  Lacramioara Astefanoaei and Ionel Emilian Chiosa and Jure Kukovec
                  and Alok Singh and Oliver Butterley and Adem Bizid and Quinn
                  Dougherty and Miranda Zhao and Max Tan and Max Tegmark},
  title        = {A benchmark for vericoding: formally verified program synthesis},
  year         = {2025},
  eprint       = {2509.22908},
  eprinttype   = {arxiv}
}

@inproceedings{houdini-fme01,
  author       = {Cormac Flanagan and K. Rustan M. Leino},
  editor       = {Jose Nelson Amaral and Jiri Matousek},
  title        = {Houdini, an Annotation Assistant for ESC/Java},
  booktitle    = {Formal Methods Europe 2001, Formal Methods for Increasing Software
                  Productivity ({FME} 2001)},
  series       = {Lecture Notes in Computer Science},
  volume       = {2021},
  pages        = {500--517},
  publisher    = {Springer},
  year         = {2001},
  doi          = {10.1007/3-540-45251-6_29}
}

@article{daikon-scico07,
  author       = {Michael D. Ernst and Jeff H. Perkins and Philip J. Guo and Stephen
                  McCamant and Carlos Pacheco and Matthew S. Tschantz and Chen Xiao},
  title        = {The Daikon system for dynamic detection of likely invariants},
  journal      = {Sci. Comput. Program.},
  volume       = {69},
  number       = {1-3},
  pages        = {35--45},
  year         = {2007},
  doi          = {10.1016/j.scico.2007.01.015}
}

@inproceedings{biabduction-popl09,
  author       = {Cristiano Calcagno and Dino Distefano and Peter W. O'Hearn and
                  Hongseok Yang},
  editor       = {Zhong Shao and Benjamin C. Pierce},
  title        = {Compositional shape analysis by means of bi-abduction},
  booktitle    = {Proceedings of the 36th ACM SIGPLAN-SIGACT Symposium on Principles
                  of Programming Languages ({POPL} 2009)},
  pages        = {289--300},
  publisher    = {{ACM}},
  year         = {2009},
  doi          = {10.1145/1480881.1480917}
}

@misc{safe-verus,
  author       = {Tianyu Chen and Shuai Lu and Shan Lu and Yeyun Gong and Chenyuan
                  Yang and Xuheng Li and Md Rakib Hossain Misu and Hao Yu and Nan Duan
                  and Peng Cheng and Fan Yang and Shuvendu K. Lahiri and Tao Xie and
                  Lidong Zhou},
  title        = {Automated Proof Generation for Rust Code via Self-Evolution},
  year         = {2024},
  eprint       = {2410.15756},
  eprinttype   = {arxiv}
}

@misc{atlas-dafny,
  author       = {Mantas Baksys and Stefan Zetzsche and Olivier Bouissou and Remi
                  Delmas and Soonho Kong and Sean B. Holden},
  title        = {ATLAS: Automated Toolkit for Large-Scale Verified Code Synthesis},
  year         = {2025},
  eprint       = {2512.10173},
  eprinttype   = {arxiv}
}

@inproceedings{clover-saiv24,
  author       = {Chuyue Sun and Ying Sheng and Oded Padon and Clark Barrett},
  title        = {Clover: Closed-Loop Verifiable Code Generation},
  booktitle    = {AI Verification - First International Symposium, {SAIV} 2024,
                  Montreal, QC, Canada, July 22--23, 2024, Proceedings},
  series       = {Lecture Notes in Computer Science},
  pages        = {134--155},
  publisher    = {Springer},
  year         = {2024},
  doi          = {10.1007/978-3-031-65112-0_7}
}

@misc{clever-lean,
  author       = {Amitayush Thakur and Jasper Lee and George Tsoukalas and Meghana
                  Sistla and Matthew Zhao and Stefan Zetzsche and Greg Durrett and
                  Yisong Yue and Swarat Chaudhuri},
  title        = {{CLEVER}: A Curated Benchmark for Formally Verified Code Generation},
  year         = {2025},
  eprint       = {2505.13938},
  eprinttype   = {arxiv}
}

@misc{veriact,
  author       = {Md Rakib Hossain Misu and Iris Ma and Cristina V. Lopes},
  title        = {Spec-Harness: Measuring and Improving Behavioral Adequacy of
                  {LLM}-Synthesized Formal Specifications},
  year         = {2026},
  eprint       = {2604.00280},
  eprinttype   = {arxiv}
}

@misc{fan-verifast-llm,
  author       = {Wen Fan and Minh Tran and Sanya Dod and Xin Hu and Marilyn Rego
                  and Danning Xie and Jenna DiVincenzo and Lin Tan},
  title        = {An Empirical Study of {LLM}-Generated Specifications for VeriFast},
  year         = {2026},
  eprint       = {2606.26490},
  eprinttype   = {arxiv}
}

@inproceedings{slvc-icml26,
  author       = {Hanyang Wang and
                  Xiwei Wu and
                  Qinxiang Cao},
  title        = {{SL-VC}: A Benchmark and Automated Framework for Separation Logic
                  Verification Condition Proving},
  booktitle    = {Proceedings of the 43rd International Conference on Machine Learning
                  ({ICML} 2026)},
  series       = {Proceedings of Machine Learning Research},
  volume       = {306},
  address      = {Seoul, South Korea},
  publisher    = {{PMLR}},
  year         = {2026},
  note         = {Equal contribution}
}

@inproceedings{rango-icse25,
  author       = {Kyle Thompson and
                  Nuno Saavedra and
                  Pedro Carrott and
                  Kevin Fisher and
                  Alex Sanchez-Stern and
                  Yuriy Brun and
                  Jo{\~a}o F. Ferreira and
                  Sorin Lerner and
                  Emily First},
  title        = {Rango: Adaptive Retrieval-Augmented Proving for Automated Software
                  Verification},
  booktitle    = {{ICSE}},
  pages        = {347--359},
  publisher    = {{IEEE}},
  year         = {2025},
  doi          = {10.1109/ICSE55347.2025.00161}
}

@inproceedings{synver-ase25,
  author       = {Prasita Mukherjee and
                  Minghai Lu and
                  Benjamin Delaware},
  title        = {{LLM}-Assisted Synthesis of High-Assurance {C} Programs},
  booktitle    = {{ASE}},
  pages        = {3108},
  publisher    = {{IEEE}},
  year         = {2025},
  doi          = {10.1109/ASE63991.2025.00255}
}

@inproceedings{popl16-maximal-spec,
  author       = {Aws Albarghouthi and
                  Isil Dillig and
                  Arie Gurfinkel},
  title        = {Maximal specification synthesis},
  booktitle    = {{POPL}},
  pages        = {789--801},
  publisher    = {{ACM}},
  year         = {2016},
  doi          = {10.1145/2914770.2837628}
}

@inproceedings{cav15-angelic,
  author       = {Ankush Das and
                  Shuvendu K. Lahiri and
                  Akash Lal and
                  Yi Li},
  title        = {Angelic verification: Precise verification modulo unknowns},
  booktitle    = {{CAV}},
  series       = {{LNCS}},
  volume       = {9206},
  pages        = {324--342},
  publisher    = {Springer},
  year         = {2015},
  doi          = {10.1007/978-3-319-21690-4_19}
}

@misc{fm-agent,
  author       = {Haoran Ding and
                  Zhaoguo Wang and
                  Haibo Chen},
  title        = {{FM-Agent}: Scaling formal methods to large systems via {LLM}-based {H}oare-style reasoning},
  year         = {2026},
  eprint       = {2604.11556},
  archiveprefix = {arXiv},
  primaryclass = {cs.SE}
}

@misc{bmc-agent,
  author       = {Youcheng Sun and
                  Jiawen Liu and
                  Daniel Kroening and
                  Jason Xue},
  title        = {Agentic model checking},
  year         = {2026},
  eprint       = {2605.21434},
  archiveprefix = {arXiv},
  primaryclass = {cs.SE}
}

\newpage
\appendix

\end{document}
\endinput